\newcommand{\C}[2]{{#1 \choose #2}}
\newcommand{\sgn}{\mathrm{sgn}}
\newcommand{\Ai}{\mathrm{Ai}}
\renewcommand{\O}[1]{\mathcal{O}\left(#1\right)}
\newtheorem{lemma}{Lemma}
\newcommand{\affiliation}[1]{\address{#1}}
\newcommand{\keywords}[1]{\vspace{10mm}\noindent\textbf{Keywords:} #1}
\newcommand{\tfrac}[2]{\mbox{\small$\frac{#1}{#2}$}}
\newcommand{\openone}{\mbox{{\small 1}$\!\!$1}}
\newcommand{\text}[1]{\mathrm{#1}}
\newcommand{\substack}[1]{\tiny\begin{array}{c}#1\end{array}}
\begin{document}

\title[Two-point function  of the free energy for a directed polymer]{Two-point generating function of the free energy for a directed polymer in a random medium}
\author{Sylvain Prolhac\footnote{prolhac@ma.tum.de}{} and Herbert Spohn\footnote{spohn@ma.tum.de}{}}
\affiliation{Zentrum Mathematik and Physik Department,\\Technische Universit\"at M\"unchen,\\D-85747 Garching, Germany}
\date{\today}

\begin{abstract}
We consider a 1+1 dimensional directed continuum polymer in a Gaussian delta-correlated space-time random potential. For this model the moments (= replica) of the partition function, $Z(x,t)$, can be expressed in terms of the attractive $\delta$-Bose gas on the line. Based on a recent study of the structure of the eigenfunctions, we compute the generating function for $Z(x_1,t)$, $Z(x_2,t)$ under a particular decoupling assumption and thereby extend recent results on the one-point generating function  of the free energy to two points. It is established that in the long time limit the fluctuations of the free energy are governed by the two-point distribution of the Airy process, which further supports that the long time behavior of the KPZ equation is the same as derived previously for lattice growth models.

\keywords{Directed polymer, Lieb-Liniger Bose gas, Kardar-Parisi-Zhang equation}
\end{abstract}

\pacs{02.30.Ik 05.20.-y 05.30.Jp 05.40.-a 05.70.Np}

\maketitle

\section{Introduction}
\label{Section introduction}
\setcounter{equation}{0}
Directed polymer in a random medium is a widely studied model in the statistical mechanics of disordered systems \cite{HHZ95.1,K07.1}. The polymer chain is immersed in a static random potential. In the directed version there is a singled out direction, also referred to as ``time", such that the polymer is constrained to move forward along the time direction. Prominent realizations are domain walls in two dimensional disordered magnets \cite{LFCMGLD98.1}, tear lines for paper sheets \cite{K92.1} and vortex lines in disordered superconductors \cite{BFGLV94.1}.
For spatial dimension $d\leq2$, the polymer is superdiffusive at any coupling strength, while for $d>2$, there is a weak coupling phase with diffusive behavior and a still little explored strong coupling phase, see \cite{CC10,L09} and references therein to earlier work.

Recently there has been considerable progress \cite{SS10.1,SS10.2,ACQ10,SS10.3,CLDR10.1,D10.1,D10.2} for a particular continuum version of the directed polymer in $1+1$ dimension with both endpoints fixed (point-to-point directed polymer). The free polymer is modeled by a continuum Brownian motion with bending coefficient $\gamma$. Then, in the presence of a disorder potential $V$, the point-to-point partition function of the polymer at inverse temperature $\beta$ reads
\begin{equation}
\label{Z(x,t)}
\fl\qquad\qquad
Z(x,t) = \int_{x(0)=0}^{x(t)=x}\mathcal{D}[x(\tau)]\,\exp\Big({-\beta\int_{0}^{t}\rmd\tau\left[\tfrac{1}{2}\gamma(\partial_{\tau}x(\tau))^{2}+V(x(\tau),\tau)\right]}\Big)\;.
\end{equation}
The partition function is the sum over all possible paths $x(\tau)$ of the polymer, starting at position $0$ at time $0$ and ending at position $x$ at time $t$. The energy of the polymer is the sum of the elastic bending energy, proportional to $\gamma$, and the potential energy obtained from summing the external potential $V$ along the polymer chain. The partition function $Z(x,t)$ is random as inherited from the randomness of the potential $V$, which is assumed to have a Gaussian distribution with mean $0$ and covariance
\begin{equation}
\label{<VV>}
\left\langle V(x,\tau)V(x',\tau')\right\rangle=D\,\delta(x-x')\delta(\tau-\tau')\;.
\end{equation}
The particular choice of the covariance (\ref{<VV>}) allows to express the $n$-th moment of $Z$ as a propagator matrix element of an $n$-particle attractive $\delta$-Bose gas on the line, a model which can be solved exactly by the Bethe ansatz. As will be explained in more detail below, the progress alluded to refers to an exact computation of the generating function for the partition sum $Z(x,t)$. In our contribution, this result will be extended to the generating function jointly of $Z(x_{1},t)$ and $Z(x_{2},t)$, \textit{i.e.} for two distinct positions $x_{1}$ and $x_{2}$ of the endpoint of the polymer at the same time $t$.

An additional interest in the continuum directed polymer comes from the connection to the Kardar-Parisi-Zhang (KPZ) equation \cite{KPZ86.1}, which is a stochastic evolution for a growing surface. If we denote the height profile by $h(x,t)$, then, in the conventional units, the one-dimensional KPZ equation reads
\begin{equation}
\label{KPZ}
\partial_{t}h(x,t)=\tfrac{1}{2}\lambda(\partial_{x}h(x,t))^{2}+\nu\,\partial_{x}^{2}h(x,t)+\eta(x,t)\;.
\end{equation}
Here, $\lambda$ is the strength of the nonlinear growth velocity, $\nu$ is the parameter governing the surface relaxation, and $\eta$ is a white noise with strength $\sqrt{D}$ modeling the random nucleation and deposition events at the surface.

The partition function of the directed polymer $Z(x,t)$ given in Eq. (\ref{Z(x,t)}) satisfies
\begin{equation}
\partial_{t}Z(x,t)=\frac{1}{2\beta\gamma}\,\partial_{x}^{2}Z(x,t)-\beta\,V(x,t)Z(x,t)\;,
\end{equation}
from which it follows that the free energy, defined by 
\begin{equation}
\label{F[Z]}
F(x,t)=-\frac{1}{\beta}\log Z(x,t)\;,
\end{equation}
is a solution of the KPZ equation (\ref{KPZ}) under the identification
\begin{equation}
h=-F\;,\quad\lambda=\gamma^{-1}\;,\quad\nu=(2\beta\gamma)^{-1}\;,\quad\eta=-V\;.
\end{equation}
In the context of surface growth, the joint distribution of the free energy is a natural quantity of interest: for given time $t$, it is the joint height statistics at the two spatial reference points $x_{1}$ and $x_{2}$, which in particular determines the height-height correlations at time $t$.

Our considerations are somewhat formal, since taken literally $\langle Z(x,t)\rangle = \infty$. In dimension $d=1$, this divergence can be easily taken care of by a suitable free energy renormalization, as discussed in detail in \cite{SS10.4}. After renormalization one finds that
\begin{equation}
\label{<Z>}
\langle Z(x,t)\rangle=(\beta\gamma/2\pi\,t)^{-1/2}\,\exp\Big({-\frac{\beta\gamma\,x^{2}}{2t}}\Big)\;.
\end{equation}

Our paper is organized as follows. In Section \ref{Section DPRM}, we recall the replica method and the mapping to the $\delta$-Bose gas. To provide some background, we explain the one-point generating function for the partition sum and the corresponding Fredholm determinant. The extension to two points is discussed and the long-time limit is obtained. The technical derivation is carried out in Sections \ref{Section 3} and  \ref{Section Fredholm G1} with supporting material in the Appendices.


\section{Main results}
\label{Section DPRM}
\setcounter{equation}{0}

\subsection{Scale invariance, stationarity}
To have a short hand the Brownian motion  average in (\ref{Z(x,t)}) is denoted by $\mathbb{E}^{\beta\gamma}_{(x,t)}$. To define it we first introduce the Gaussian average $\mathbb{E}^{\beta\gamma}$ with mean $0$ and covariance
\begin{equation}
\mathbb{E}^{\beta\gamma}\big(x(\tau)x(\tau')\big) = (\beta\gamma)^{-1} \mathrm{min}(\tau,\tau')\;,
\end{equation}
which corresponds to the Brownian motion starting at $0$, and set 
\begin{equation}
\mathbb{E}^{\beta\gamma}_{(x,t)}\big(\cdot\big) = \mathbb{E}^{\beta\gamma}\big(\cdot \,\delta(x(t)-x)\big)\;.
\end{equation}
In particular
\begin{equation}
\mathbb{E}^{\beta\gamma}_{(x,t)}\big(\delta(x(t)-x)\big) = (\beta\gamma/2\pi\,t)^{-1/2}\,\exp\Big({-\frac{\beta\gamma\,x^{2}}{2t}}\Big)\;.
\end{equation}
Let us denote the partition function (\ref{Z(x,t)}) by $Z_{\beta,\gamma,D}(x,t)$ to indicate explicitly the parameter dependence. From the scale invariance of white noise, $V(ax,bt)$ has the same distribution as 
$(ab)^{-1/2}V(x,t)$, and of the free directed polymer, $x(at)$ has the same distribution as $a^{1/2}x(t)$, one obtains
\begin{equation}
\label{scale invariance Z}
Z_{\beta,\gamma,D}(x,t)=\beta^{3}\gamma D\,Z_{1,1,1}\left(\beta^{3}\gamma D\,x,\beta^{5}\gamma D^{2}\,t\right)\;.
\end{equation}
Thus, it suffices to consider the case when all parameters are equal to one, and from now on we adopt the convention
\begin{equation}
\label{beta=gamma=D=1}
\beta=1\;,\quad\gamma=1\;,\quad D=1\;,\quad Z(x,t)=Z_{1,1,1}(x,t)\;,\quad \mathbb{E}^{\beta\gamma}_{(x,t)}=\mathbb{E}_{(x,t)}\;.
\end{equation}
The free energy (\ref{F[Z]}) has a systematic upward curvature of $x^{2}/2t$, compare with (\ref{<Z>}). However the distribution of $F(x,t)-x^{2}/2t$ is independent of the position $x$. This fact can be seen by performing a linear change of variables in the functional integration (\ref{Z(x,t)}) defining $Z(x,t)$. One obtains
\begin{eqnarray}
\fl\qquad \langle Z(x + a,t) \rangle^{-1} Z(x+a,t) &=& \langle Z(x,t) \rangle^{-1} \mathbb{E}_{(x,t)}\big( \exp\Big(-\int_{0}^{t}\rmd\tau V(x(\tau)+ a\tau /t,\tau)\Big)\big)\nonumber\\
\fl\qquad &=& \langle Z(x,t) \rangle^{-1}Z(x,t)\;,
\end{eqnarray}
where in the second equality we used that the white noise $V(x,\tau)$ is statistically translation invariant in the spatial argument. Hence the distribution of $Z(x,t)/\langle Z(x,t) \rangle$ is independent of $x$. More general, the stochastic process $x\mapsto Z(x,t)/\langle Z(x,t)\rangle$ is stationary in $x$, and correlation functions of the form
\begin{equation}
\big\langle f_1(Z(x_{1},t)/ \langle Z(x_1,t)\rangle)\ldots f_n(Z(x_{n},t)/ \langle Z(x_n,t)\rangle)\big\rangle\;
\end{equation}
are invariant under a global translation of all the arguments. In particular, the joint distribution of $Z(x_{1},t)/\langle Z(x_{1},t)\rangle$ and $Z(x_{2},t)/\langle Z(x_{2},t)\rangle$ depends only on the separation $x_{2}-x_{1}$.

For our computation of the two-point generating function it will be of advantage to keep the dependence on $x_1$, $x_2$ separately. The final result will confirm the parabolic free energy shift 
and the dependence on $x_2 - x_1$ only.

\subsection{Replicas}
The $n$-point correlation function of $Z(x,t)$ can be computed by introducing the replicas $x_{1}(\tau), \ldots, x_{n}(\tau)$, which are simply independent copies of the free directed polymer $x(\tau)$. More specifically, using the explicit form of the generating function for a Gaussian,
\begin{eqnarray}
\fl\quad \langle Z(x_{1},t)\ldots Z(x_{n},t)\rangle &=& \big\langle\prod\limits_{j=1}^{n} \mathbb{E}_{(x_j,t)}
\big(\exp\Big(-\int_{0}^{t} \rmd\tau V(x_{j}(\tau),\tau)\Big)\big)\big\rangle\nonumber\\
\fl\quad &=& \Big(\prod\limits_{j=1}^{n} \mathbb{E}_{(x_j,t)}\Big)\big(\exp\Big(\tfrac{1}{2}\int_{0}^t \rmd\tau \sum\limits_{i\neq j=1}^{n}\delta(x_{i}(\tau)-x_{j}(\tau))\Big)\big)\;,
\end{eqnarray}
where in the second line the average is over all replicas, the $j$-th replica starting at 0 and ending at $x_j$ at time $t$. It should be noted that the literal Gaussian average would also include the self-interaction term $-\tfrac{1}{2}\sum_{i=1}^n\delta(x_{i}-x_{i})$. Thus, for the moments of $Z$, the free energy renormalization needed to properly define (\ref{Z(x,t)}) simply corresponds to subtract the self-energy.

The Feynman-Kac formula implies that the $n$-point correlation function $\left\langle Z(x_{1},t)\ldots Z(x_{n},t)\right\rangle$ satisfies the imaginary time Schr\"odinger equation
\begin{equation}
\label{PDE<ZZZ>}
-\partial_{t}\left\langle Z(x_{1},t)\ldots Z(x_{n},t)\right\rangle=H_{n}\left\langle Z(x_{1},t)\ldots Z(x_{n},t)\right\rangle\;
\end{equation}
with the initial condition $Z(x_j,0) = \delta(x_j)$. Here $H_n$ is the Lieb-Liniger quantum Hamiltonian of $n$ particles on the line with an attractive $\delta$-interaction, 
\begin{equation}
\label{H Bose}
H_{n}=-\frac{1}{2}\sum_{i=1}^{n}(\partial_{x_{i}})^{2}-\frac{1}{2}\sum_{i\neq j=1}^{n}\delta(x_{i}-x_{j})\;
\end{equation}
\cite{LL63.1,MG64.1}. In this representation the free energy renormalization corresponds to the normal ordering of $H_n$. ``Solving'' (\ref{PDE<ZZZ>}), the $n$-point correlation function is given by
\begin{equation}
\label{<ZZZ>[Hn]}
\langle Z(x_{1},t)\ldots Z(x_{n},t)\rangle=\langle x_{1},\ldots,x_{n}|\rme^{-tH_{n}}|0\rangle\;.
\end{equation}
Here $|0\rangle$ is the state where all particles are at $0$ and $|x_1,\ldots,x_n\rangle$ is the one where the $j$-th particle is at $x_j$. Since $|0\rangle$ is symmetric under the exchange of particle labels, one can symmetrize in the final state. Thus the propagator $\rme^{-tH_{n}}$ is needed only in the symmetric sector and the replicas are expressed by the attractive $\delta$-Bose gas on the line.

As first shown by Mc Guire \cite{MG64.1}, the ground state energy $E_{0}(n)$ of $H_{n}$ is given by
\begin{equation}
E_{0}(n)=-\frac{1}{24}(n^{3}-n)\;.
\end{equation}
The term linear in $n$ translates to the free energy shift $t/24$, which in fact equals the bulk free energy per unit time. For later use, we introduce the parameter
\begin{equation}
\label{alpha}
\alpha=(t/2)^{1/3}\;.
\end{equation}
In the lowest order approximation, ignoring all excited states,
\begin{equation}
\label{approx groundstate}
\langle0|\rme^{-tH_{n}}|0\rangle\approx \rme^{-tE_{0}(n)}\;.
\end{equation}
The cubic term of $E_{0}(n)$ leads to the decay of the left tail of $F(0,t)$ as
\begin{equation}
\label{distrib tail F}
\mathrm{Prob}\left(F(0,t)-\tfrac{1}{24}t\leq u\right)\approx\exp\left[-\tfrac{4}{3}\left(\alpha|u|\right)^{3/2}\right]\;
\end{equation}
for $u\to-\infty$ and $t^{-1/3}|u|=\O{1}$ \cite{K87.1,BO90.1}, see also \cite{SS10.2}. This result confirms that the fluctuations of the free energy are of order $t^{1/3}$. In fact, the tail behaviour (\ref{distrib tail F}) agrees with the exact tail, see \cite{PS02.2}.

\subsection{One-point generating function of the free energy}
To go beyond (\ref{approx groundstate}), one needs the excited states of the attractive $\delta$-Bose gas. They can be computed from the Bethe ansatz as has been recently worked out in great detail by Dotsenko and Klumov \cite{DK10.1,D10.2}. In brackets, we remark that on a ring, the complex momenta are solutions of the nonlinear Bethe equations on which little information is available. Already to determine the ground state energy requires ingenious computations \cite{BD00.1}. A corresponding situation has been found for the asymmetric simple exclusion process. On a ring, while the ground state and the large deviations for the current have been extensively investigated \cite{DL98.1,LK99.1,PM09.1,P10.1}, the Bethe equations for excited states have been analyzed only partially \cite{GS92.1,K95.1,GM05.1,GM05.2}. In contrast, for the infinite lattice there is a reasonably concise formula for the transition probability with any given number of particles \cite{TW08.1}.

Using the complete eigenfunction expansion for $H_n$, a particular generating function can be expressed as a Fredholm determinant \cite{CLDR10.1,D10.1,D10.2}. More precisely, we define
\begin{equation}
\label{G[F] 1 pt}
G(s;x,t)=\left\langle\exp\left(-\rme^{-s}\rme^{-F(x,t)}\right)\right\rangle\;.
\end{equation}
Then $G$ is equal to the Fredholm determinant
\begin{equation}
\label{G Fredholm 1pt}
G\left(s-\tfrac{1}{24}t -\tfrac{1}{2t}x^{2};\,x,\,t\right)=\det(1-M)\;.
\end{equation}
The operator $M$ does not depend on the position $x$ and depends on time $t$ only through the parameter $\alpha$ defined in (\ref{alpha}). $M$ acts on $L^{2}(\mathbb{R})$ and has the integral kernel
\begin{equation}
\label{kernel M}
\langle u|M|v\rangle=\frac{\rme^{\alpha u-s}}{1+\rme^{\alpha u-s}}\langle u|K|v\rangle\;.
\end{equation}
Here $K$ is the Airy operator with integral kernel
\begin{equation}
\langle u|K|v\rangle=\int_{0}^{\infty}\rmd z\,\Ai(u+z)\Ai(z+v)\;,
\end{equation}
called \textit{Airy kernel}. $K$ is related to the Airy Hamiltonian
\begin{equation}
H = -(\partial_u)^{2} +u\;,
\end{equation}
as $K$ projects onto all negative eigenstates of $H$. In particular $K ^* = K$ and $K^{2} = K$, see \ref{Appendix Airy} for details. One easily checks that $\tr|M|<\infty$. Hence the Fredholm determinant in (\ref{G Fredholm 1pt}) is well defined.

The mathematical status of (\ref{G Fredholm 1pt}) is somewhat tricky. One cannot simply verify (\ref{G Fredholm 1pt}) as the solution of some equation. The derivation relies on the replica method. Since $\log\langle Z(x,t)^{n}\rangle\approx n^{3}$, the moments do not uniquely determine the distribution of $Z(x,t)$. To derive (\ref{G Fredholm 1pt}), one is forced to work with divergent series and has to make a reasonable choice for the analytic extension of $\langle Z^{n}\rangle$, $n\in\mathbb{N}$, to the complex plane \cite{CLDR10.1,D10.1}. However, the generating function $G(s;x,t)$ fixes the distribution of $F(x,t)$, which is known by other means \cite{ACQ10,SS10.1} Thus \textit{a posteriori} one can verify directly that (\ref{G Fredholm 1pt}) is a valid identity.

Eq. (\ref{G Fredholm 1pt}) together with (\ref{kernel M}) establishes that $F$ is of order $\alpha$. Rescaling $s$ as $\alpha a$ and taking $\alpha \to \infty$, we note that the right side of (\ref{G[F] 1 pt}) and the multiplicative prefactor in (\ref{kernel M}) both converge to a step function. Hence in the long time limit one obtains
\begin{equation}
\label{distrib 1pt F}
\fl\qquad\qquad\lim_{t\to\infty}\mathrm{Prob}\left(F(x,t)-\tfrac{1}{24}t-\tfrac{1}{2t}x^{2}<-\alpha a\right)=\det\big(1-P_{a}KP_{a}\big)=F_{2}(a)\;,
\end{equation}
where $P_{a}$ projects on $[a,\infty)$. In the long time limit, the free energy fluctuations are thus of order $t^{1/3}$. The function $F_{2}$ is the celebrated Tracy-Widom distribution function \cite{TW94.1}, which first appeared as the distribution for the maximal eigenvalue of large random Hermitian matrices in the Gaussian unitary ensemble (GUE).

By a more sophisticated argument \cite{CLDR10.1} one also deduces the finite time probability density from (\ref{G Fredholm 1pt}) with the result
\begin{equation}
\label{distrib 1pt F(x,t)}
\fl\qquad\qquad\mathrm{Prob}\left(F(x,t)-\tfrac{1}{24} t-\tfrac{1}{2t}x^{2}<-\alpha s\right)=\int_{-\infty}^{\infty}\rmd u\,\exp\left(-\rme^{\alpha(s-u)}\right)g_{t}(u)\;,
\end{equation}
where
\begin{equation}
\label{gt 1pt}
g_{t}(u)=\det\big(1-P_{u}(B_{t}-P_{\Ai})P_{u}\big)-\det\big(1-P_{u}B_{t}P_{u}\big)\;.
\end{equation}
The operators $P_{\Ai}$ and $B_{t}$ are defined respectively by
\begin{equation}
\langle z|P_{\Ai}|z'\rangle=\Ai(z)\Ai(z')\;,
\end{equation}
and
\begin{equation}
\langle z|B_{t}|z'\rangle=\int_{-\infty}^{\infty}\rmd v\,\frac{1}{1-\rme^{\alpha v}}\Ai(z+v)\Ai(v+z')\;.
\end{equation}
There is another model for which the one-point generating function is available \cite{Con10}: one replaces the Brownian motion by a continuous time random walk on the lattice $\mathbb{Z}$ with forward jumps only. The white noise is correspondingly discretized in the spatial direction. The partition function is $Z_{\beta}(N,t)$ with the polymer starting at $0$ and ending at $(N,t)$, $N\geq 0$. Note that this model has less scale invariance than our case because of the lattice. The long time behavior of the free energy remains to be studied. 

For the directed polymer at zero temperature,  results are available for a lattice discretization in case the random potential has either a one-sided exponential or geometric distribution \cite{J00.1}.

\subsection{Two-point generating function and long time limit}
Our novel contribution is the extension of (\ref{G Fredholm 1pt}) to two reference points by using the replica method. In analogy to (\ref{G[F] 1 pt}), let us define the generating function
\begin{equation}
\label{G[F] 2pt}
G(s_{1},s_{2};x_{1},x_{2},t) = \left\langle\exp\left(-\rme^{-s_{1}}\rme^{-F(x_{1},t)}-\rme^{-s_{2}}\rme^{-F(x_{2},t)}\right)\right\rangle\;.
\end{equation}
From the mapping of the directed polymer to the $\delta$-Bose gas, the generating function $G$ can be expanded in a sum over eigenstates of the Lieb-Liniger Hamiltonian (\ref{H Bose}). Using results from \cite{D10.2}, under a natural factorization assumption the sum over the eigenstates can be written again as a Fredholm determinant. However even then it is difficult to extract any useful information from this representation. By a sequence of miraculous transformations we arrive at an alternative expression for the Fredholm determinant, which turns out to be rather similar to (\ref{kernel M}) in structure and from which the long time limit can be read off easily.  To distinguish from the true generating function, we introduce the sharp superscript ${}^\sharp$ for generating functions with factorization assumption.

Let us first define the function $\Phi$ by
\begin{equation}
\label{Phi}
\Phi(u,v)=\frac{\rme^{u}+\rme^{v}}{1+\rme^{u}+\rme^{v}}
\end{equation}
and the operator $Q$ through the kernel
\begin{equation}
\label{Q}
\langle u|Q|v\rangle = \Phi(\alpha u-s_{1},\;\alpha v-s_{2})\langle u|\rme^{-(2\alpha^{2})^{-1}|x_1 - x_2|H}|v\rangle\;.
\end{equation}
Then
\begin{equation}
\label{G Fredholm 2pt}
\fl\quad G^\sharp(s_{1}-\tfrac{1}{24}t-\tfrac{1}{2t}x_{1}^{2},\,s_{2}-\tfrac{1}{24}t-\tfrac{1}{2t}x_{2}^{2};\;x_{1},x_{2},\;t)
= \det\big(1-Q\,\rme^{(2\alpha^{2})^{-1}|x_1 - x_2|H}K\big)\;.
\end{equation}
Recall that $\langle u|\rme^{-(2\alpha^{2})^{-1}|x_{1}-x_{2}|H}|v\rangle$ is the propagator of the Airy Hamiltonian and note that
\begin{equation}
\label{HK[Ai]}
\fl\qquad\qquad\langle u |\rme^{(2\alpha^{2})^{-1}|x_1 - x_2|H}K |v\rangle = \int_0^\infty\rmd z\,\rme^{-(2\alpha^{2})^{-1}|x_1 - x_2|z}\,\Ai(u+z)\Ai(z+v)\;.
\end{equation}
In particular, $\rme^{(2\alpha^{2})^{-1}|x_{1}-x_{2}|H}K$ is a bounded operator.

From (\ref{G Fredholm 2pt}) it is obvious that for long times $F$ scales as $\alpha \sim t^{1/3}$ and the two-point distribution 
has a non-degenerate limit only if $x$ scales as $\alpha^{2} \sim t^{2/3}$. So let us substitute $s_{1}$ by $\alpha a$, $s_{2}$ by $\alpha b$, and introduce
\begin{equation}
|x_1 - x_2| = 2\alpha^{2} y\;.
\end{equation}
Then as $t \to \infty$ the left hand side of (\ref{G Fredholm 2pt}) converges to the characteristic function of the rectangle
$[-\alpha a, \infty)\times [-\alpha b , \infty)$, while the operator $Q$ converges to $P_a \rme^{-|y|H} + \rme^{-|y|H}P_b - P_a \rme^{-|y|H}P_b $. With $y$, $a$ and $b$ held fixed, we arrive at
\begin{eqnarray}
\label{distrib 2pt F}
\fl&&\qquad\qquad\lim_{t\to\infty}\mathrm{Prob}\left(F(x_{1},t)-\tfrac{1}{24}t-\tfrac{1}{2t}x_{1}^{2}>-\alpha a,\,F(x_{2},t)-\tfrac{1}{24}t-\tfrac{1}{2t}x_{2}^{2}>-\alpha b\right)\nonumber\\
\fl&&\qquad\qquad\qquad=F_2(a,b;y)\;,
\end{eqnarray}
where the function $F_{2}(a,b;y)$ is given in terms of the Fredholm determinant
\begin{equation}
\label{F2(a,b)}
F_{2}(a,b;y)=\det\left(1-\left(P_{a}+\rme^{-|y|H}P_{b}\rme^{|y|H}-P_{a}\rme^{-|y|H}P_{b}\rme^{|y|H}\right)K\right)\;.
\end{equation}
The function $F_{2}(a,b;y)$ is two-point distribution of the Airy process in a form written down first in Eq. (5.8) of \cite{PS02.2}, where the two-point distribution function of the height in the polynuclear growth droplet model was studied. For the single step growth model the corresponding result was achieved in \cite{J03}. Thus apparently  the factorization becomes exact in the long time limit.

As shown in \cite{PS02.2} the function $F_{2}(a,b;y)$ can also be expressed as the Fredholm determinant 
of a $2\times 2$ operator kernel,
\begin{equation}
\label{F2(a,b) 2*2 matrix kernel}
F_{2}(a,b;y)=\det\Big[1-\left(\begin{array}{cc}P_{a}&0\\0&P_{b}\end{array}\right)\left(\begin{array}{cc}K&\rme^{-|y|H}(K-1)\\\rme^{|y|H}K&K\end{array}\right)\Big]\;.
\end{equation}
This form arises naturally when one studies the top most line in Dyson's Brownian motion. In the large $N$ limit it converges to the Airy process, which can be viewed as the top line of an underlying extended determinantal random field. In particular, the $n$-point distribution is defined most directly through an operator with a $n\times n$ matrix structure.
While we were searching for a corresponding matrix structure, it came as a surprise that the replica route apparently prefers the expression (\ref{F2(a,b)}).

From (\ref{F2(a,b) 2*2 matrix kernel}) one can read of properties of $F_{2}(a,b;y)$. Obviously it is symmetric in $a$ and $b$. In the limits $y\to0$ and $y\to\infty$, the expression of $F_{2}(a,b;y)$ simplifies and one obtains
\begin{equation}
\fl\qquad\lim_{y\to0}F_{2}(a,b;y)=F_{2}(\min(a,b))\qquad\text{and}\qquad\lim_{y\to\infty}F_{2}(a,b;y)=F_{2}(a)F_{2}(b)\;,
\end{equation}
with $F_{2}$ of a single argument denoting the Tracy-Widom distribution (\ref{distrib 1pt F}). Using (\ref{F2(a,b)}) and $[H,K] = 0$ one finds 
\begin{equation}
\lim_{b\to-\infty}F_{2}(a,b;y)=0\qquad\text{and}\qquad\lim_{b\to\infty}F_{2}(a,b;y)=F_{2}(a)\;.
\end{equation}
Since $F_{2}(a,b;y)$ is a distribution function, see (\ref{distrib 2pt F}), it is an increasing function of $a,b$. It tends to $0$ when $a, b \to -\infty$, and to $1$ when $a,b \to \infty$.


\section{Replica Summation}
\label{Section 3}
\setcounter{equation}{0}

\subsection{Two-point generating function}
We start from the expression (\ref{G[F] 2pt}) of the generating function $G(s_{1},s_{2};x_{1},x_{2},t)$
and define 
\begin{eqnarray}
G_{1}&=& G\left(s_{1}-\tfrac{1}{24}t,s_{2}-\tfrac{1}{24}t;x_{1},x_{2},t\right)\nonumber\\
&=&\left\langle\exp\left(-\rme^{t/24}\rme^{-s_{1}}Z(x_{1},t)-\rme^{t/24}\rme^{-s_{2}}Z(x_{2},t)\right)\right\rangle\;,
\end{eqnarray}
subtracting the linear order of the free energy. Expanding the exponential, $G_{1}$ is written in terms of $n$-point correlations of the partition function. After decomposing as 
\begin{equation}
\label{X x}
X = \frac{x_{1} + x_{2}}{2}\;,\quad x = \frac{x_{2} - x_{1}}{2}\;,
\end{equation}
we obtain
\begin{eqnarray}
\label{G1[<ZZZ>]}
G_{1}&=&1+\sum_{N=1}^{\infty}\frac{(-1)^{N}\rme^{tN/24}}{N!}\sum_{\sigma_{1},\ldots,\sigma_{N}=\pm1}\Big(\prod_{i=1}^{N}\rme^{-\frac{1}{2}\left[(1-\sigma_{i})s_{1}+(1+\sigma_{i})s_{2}\right]}\Big)\nonumber\\
&&\times\left\langle Z(X+\sigma_{1}x,t)\ldots Z(X+\sigma_{N}x,t)\right\rangle\;.
\end{eqnarray}
We use the expression (\ref{<ZZZ>[Hn]}) for the $N$-point correlations in terms of the $\delta$-Bose gas. These matrix elements can be expanded as a sum over the orthornormal basis of eigenstates, $\psi_{r}$, of the Lieb-Liniger Hamiltonian $H_{N}$ with the result
\begin{equation}
\label{<ZZZ>[psi]}
\left\langle Z(x_{1},t)\ldots Z(x_{N},t)\right\rangle=\sum_{r}\rme^{-tE_{r}}\langle x_{1},\ldots,x_{N}|\psi_{r}\rangle\langle\psi_{r}|0\rangle\;,
\end{equation}
where $E_{r}$ is the corresponding energy eigenvalue. Combining (\ref{G1[<ZZZ>]}) and (\ref{<ZZZ>[psi]}), one obtains the following expression for the generating function $G_{1}$,
\begin{eqnarray}
\label{G1[psi]}
&&\fl\qquad\qquad G_{1}=1+\sum_{N=1}^{\infty}\frac{(-1)^{N}\rme^{tN/24}}{N!}\sum_{r}\rme^{-tE_{r}}|\psi_{r}(0,\ldots,0)|^{2}\!\nonumber\\
&&\fl\qquad\qquad\times\sum_{\sigma_{1},\ldots,\sigma_{N}=\pm1}\Big(\prod_{i=1}^{N}\rme^{-\frac{1}{2}\left[(1-\sigma_{i})s_{1}+(1+\sigma_{i})s_{2}\right]}\Big)\frac{\psi_{r}(X+\sigma_{1}x,\ldots,X+\sigma_{N}x)}{\psi_{r}(0,\ldots,0)}\;.
\end{eqnarray}

\subsection{Summation over the eigenstates of the $\delta$-Bose gas}
We have to perform the summation over the eigenstates in the expression (\ref{G1[psi]}) of the generating function $G_{1}$, for which we follow \cite{D10.2} with some improvements. To ease the comparison our notation will be as close as possible to the one in \cite{D10.2}.

The eigenfunctions of the $N$-particle attractive $\delta$-Bose gas are labelled by the complex wave numbers $\xi_{a}$, $a=1,\ldots,N$. To write them down, we pick positive integers $n_{\alpha}$, $\alpha=1,\ldots,M\leq N$  such that
\begin{equation}
\sum_{\alpha=1}^{M}n_{\alpha}=N\;.
\end{equation}
We also introduce the
running indices $r_{\alpha}=1,\ldots,n_{\alpha}$ and set
\begin{equation}
n^{(\alpha)}=\sum_{\beta=1}^{\alpha}n_{\beta}\;, \quad n^{(0)} = 0\;, \quad n^{(M)} = N\;.
\end{equation}
Then, for arbitrary vectors $\boldsymbol{q}=(q_{1},\ldots,q_{M}) \in \mathbb{R}^M$ and $\boldsymbol{n}=(n_{1},\ldots,n_{M})$, one has
\begin{equation}
\xi_{a}=q_{\alpha}-\frac{\rmi}{2}(n_{\alpha}+1-2r_{\alpha})\quad\mathrm{for} \quad a = n^{(\alpha-1)}+r_{\alpha}\;,
\end{equation}
with $\alpha=1,\ldots,M$. According to \cite{D10.2}, Eq. (B.21), the eigenfunction with labels $\boldsymbol{q}, \boldsymbol{n}$ is given by
\begin{eqnarray}
\label{psi q n M}
&&\fl\qquad\psi_{\boldsymbol{q},\boldsymbol{n}}^{(M)}(x_1,...,x_N)\nonumber\\
&&\fl\qquad=C_{\boldsymbol{q},\boldsymbol{n}}^{(M)}\sum_{p\in\mathcal{P}}\sgn (p)\prod_{1\leq a<b\leq N}\big(\xi_{p(a)}-\xi_{p(b)}+\rmi\sgn(x_{a}-x_{b})\big)\;\exp\Big[\rmi\sum_{c=1}^{N}\xi_{p(c)}x_{c}\Big]\;.
\end{eqnarray}
The normalization constant $C_{\boldsymbol{q},\boldsymbol{n}}^{(M)}$ is computed in \cite{D10.2}. 
The sum is over the set $\mathcal{P}$ of all  $N$-long permutations and $\sgn(p)$ is the signature of the permutation $p$. $\psi_{\boldsymbol{q},\boldsymbol{n}}^{(M)}$ is continuous and symmetric in $x_{1}$, \ldots, $x_{N}$. Each eigenstate of the $\delta$-Bose gas is made up of $M$ clusters, where the $\alpha$'s cluster consists of $n_{\alpha}$ bound particles and has center of mass momentum $q_{\alpha}$.

In the following transformation we invoke a combinatorial identity (we owe the proof to P. Di Francesco).
\begin{lemma}
\label{Lemma Vandermonde}
Let $f(a,b)$ be arbitrary complex coefficients and let
\begin{equation}
\label{D Vandermonde 1}
D(\xi_{1},\ldots,\xi_{N})=\sum_{p\in\mathcal{P}}\sgn(p)\prod_{1\leq a<b\leq N}\big(\xi_{p(a)}-\xi_{p(b)}+f(a,b)\big)\;.
\end{equation}
Then $D$ equals the Vandermonde determinant,
\begin{equation}
\label{D Vandermonde 2}
D(\xi_{1},\ldots,\xi_{N})=N!\prod_{1\leq a<b\leq N}(\xi_{a}-\xi_{b})\;.
\end{equation}
\end{lemma}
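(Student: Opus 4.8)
The plan is to expand the product in (\ref{D Vandermonde 1}) and exploit the fact that the coefficients $f(a,b)$ carry the \emph{original} pair indices, not the permuted ones. Writing $\mathcal{S}=\{(a,b):1\le a<b\le N\}$ for the set of all pairs and splitting each factor $\xi_{p(a)}-\xi_{p(b)}+f(a,b)$ into its two summands, one obtains
\begin{equation}
\prod_{1\le a<b\le N}\big(\xi_{p(a)}-\xi_{p(b)}+f(a,b)\big)=\sum_{S\subseteq\mathcal{S}}\Big(\prod_{(a,b)\notin S}f(a,b)\Big)\prod_{(a,b)\in S}\big(\xi_{p(a)}-\xi_{p(b)}\big)\;.
\end{equation}
Since the factor $\prod_{(a,b)\notin S}f(a,b)$ is independent of $p$, substituting this into (\ref{D Vandermonde 1}) and carrying out the $p$-sum for each fixed $S$ gives $D=\sum_{S\subseteq\mathcal{S}}\big(\prod_{(a,b)\notin S}f(a,b)\big)\,D_S$ with
\begin{equation}
D_S=\sum_{p\in\mathcal{P}}\sgn(p)\prod_{(a,b)\in S}\big(\xi_{p(a)}-\xi_{p(b)}\big)\;.
\end{equation}

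Next I would show that $D_S=0$ for every proper subset $S\subsetneq\mathcal{S}$. Set $g_S(\xi_1,\ldots,\xi_N)=\prod_{(a,b)\in S}(\xi_a-\xi_b)$, so that $D_S=\sum_{p}\sgn(p)\,g_S(\xi_{p(1)},\ldots,\xi_{p(N)})$ is precisely the antisymmetrization of $g_S$. A one-line computation — replacing $p$ by $\tau p$ in the sum — shows that $D_S$ changes sign under any transposition of the variables $\xi_1,\ldots,\xi_N$, hence $D_S$ is an antisymmetric polynomial; in particular it vanishes whenever $\xi_a=\xi_b$ and is therefore divisible by $\xi_a-\xi_b$ for all $a<b$, hence by the full Vandermonde $V(\xi)=\prod_{a<b}(\xi_a-\xi_b)$, which has degree $\C{N}{2}$. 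On the other hand $g_S$ has degree $|S|$, so $\deg D_S\le|S|<\C{N}{2}$ whenever $S\ne\mathcal{S}$, and a polynomial divisible by $V$ whose degree is strictly below $\deg V$ must be zero. Thus only $S=\mathcal{S}$ contributes.

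Finally I would evaluate that one surviving term: $\prod_{(a,b)\notin\mathcal{S}}f(a,b)$ is the empty product $1$, and $g_{\mathcal{S}}=V$ is itself alternating, so $g_{\mathcal{S}}(\xi_{p(1)},\ldots,\xi_{p(N)})=\sgn(p)\,V(\xi)$ and
\begin{equation}
D_{\mathcal{S}}=\sum_{p\in\mathcal{P}}\sgn(p)^{2}\,V(\xi)=N!\prod_{1\le a<b\le N}(\xi_a-\xi_b)\;,
\end{equation}
which is (\ref{D Vandermonde 2}). The steps are otherwise routine; the single point requiring care — and where a careless argument would slip — is the bookkeeping of the permutation: $p$ must remain attached to the arguments $\xi_{p(a)}$ while the $f(a,b)$ keep their original labels. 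It is exactly this asymmetry that makes the expansion coefficients $p$-independent, so that the antisymmetrization annihilates every term except the top-degree one.
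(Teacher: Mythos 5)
Your proof is correct. The expansion over subsets $S\subseteq\mathcal{S}$ is legitimate, the observation that $\prod_{(a,b)\notin S}f(a,b)$ is $p$-independent (because the $f$'s carry the unpermuted pair labels) is exactly the structural fact that makes the lemma work, and the chain ``$D_S$ is an antisymmetrization $\Rightarrow$ antisymmetric $\Rightarrow$ divisible by the Vandermonde $V$ of degree $\C{N}{2}$, while $\deg D_S\le|S|<\C{N}{2}$ for $S\ne\mathcal{S}$'' correctly kills every term except the top one, which evaluates to $N!\,V$.

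Your route differs from the paper's in a worthwhile way. The paper keeps the product intact, notes that $D$ has total degree $\C{N}{2}$ with top homogeneous part $N!\,V$, and then proves antisymmetry of the \emph{full} sum $D$ directly: for a transposition of $\xi_1,\xi_2$ it pairs each permutation $p$ with the permutation $\tilde p$ obtained by swapping the preimages of $1$ and $2$, and checks by explicit bookkeeping of the factors (the $p$-independent piece $C(p)$, the factor $\xi_1-\xi_2+f(c,d)$, and the four residual products) that the two contributions sum to something antisymmetric. You instead strip out the $f$'s first, so that antisymmetry reduces to the textbook fact that $\sum_p\sgn(p)\,g(\xi_{p(1)},\ldots,\xi_{p(N)})$ is alternating for \emph{any} $g$ -- no case analysis on where $1$ and $2$ sit in $p$ is needed. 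Both proofs ultimately rest on the same two pillars (alternating polynomials are divisible by $V$, plus a degree count), but yours trades the paper's combinatorial pairing argument for a term-by-term decomposition, and as a bonus shows the stronger statement that each sub-sum $D_S$ with $S\ne\mathcal{S}$ vanishes identically, not merely that their total contribution cancels.
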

\begin{proof}
$D$ is a polynomial of total degree $N(N-1)/2$ with leading coefficient $N!\prod_{1\leq a<b\leq N}(\xi_{a}-\xi_{b})$. Hence, to prove (\ref{D Vandermonde 2}) one only has to establish that $D$ is antisymmetric, since any antisymmetric polynomial is divisible by the Vandermonde determinant.

It suffices to study the interchange of a specific pair, say $\xi_{1}$ and $\xi_{2}$. Let $p$ be some permutation such that $p(c)=1$, $p(d)=2$ and $c<d$. Then the product in (\ref{D Vandermonde 1}) decomposes into a factor $C(p)$ independent of $\xi_{1},\xi_{2}$ and a second factor as
\begin{eqnarray}
\label{eq proof lemma}
&&\fl\quad\sgn(p)C(p)(\xi_{1}-\xi_{2}+f(c,d))\prod_{a<c}\big(\xi_{p(a)}-\xi_{1}+f(a,c)\big)\!\!\!\!\prod_{\substack{a<d\\(a\neq c)}}\!\!\!\!\big(\xi_{p(a)}-\xi_{2}+f(a,d)\big)\nonumber\\
&&\fl\quad\qquad\qquad\qquad\qquad\qquad\times\!\!\!\!\prod_{\substack{b>c\\(b\neq d)}}\!\!\!\!\big(\xi_{1}-\xi_{p(b)}+f(c,b)\big)\prod_{b>d}\big(\xi_{2}-\xi_{p(b)}+f(d,b)\big)\;.
\end{eqnarray}
Let $\tilde{p}$ be the permutation with $1$ and $2$ interchanged. Then $C(\tilde{p})=C(p)$, $\sgn(\tilde{p})=-\sgn(p)$, $\tilde{p}(c)=2$, $\tilde{p}(d)=1$ and the decomposition as in (\ref{eq proof lemma}) reads
\begin{eqnarray}
&&\fl\quad-\sgn(p)C(p)(\xi_{2}-\xi_{1}+f(c,d))\prod_{a<c}\big(\xi_{p(a)}-\xi_{2}+f(a,c)\big)\!\!\!\!\prod_{\substack{a<d\\(a\neq c)}}\!\!\!\!\big(\xi_{p(a)}-\xi_{1}+f(a,d)\big)\nonumber\\
&&\fl\quad\qquad\qquad\qquad\qquad\qquad\times\!\!\!\!\prod_{\substack{b>c\\(b\neq d)}}\!\!\!\!\big(\xi_{2}-\xi_{p(b)}+f(c,b)\big)\prod_{b>d}\big(\xi_{1}-\xi_{p(b)}+f(d,b)\big)\;.
\end{eqnarray}
It is now obvious that the sum of the two terms is antisymmetric in $\xi_{1}$ and $\xi_{2}$, and hence $D(\xi_{1},\xi_{2},\ldots,\xi_{N})=-D(\xi_{2},\xi_{1},\ldots,\xi_{N})$.
\end{proof}
We first compute $\psi_{\boldsymbol{q},\boldsymbol{n}}^{(M)}(0)$ by taking the limit $\varepsilon\to0$ of $x_{j}=\varepsilon j$. Then $\sgn(x_{a}-x_{b})=-1$ for all $a<b$ and by Lemma \ref{Lemma Vandermonde}
\begin{equation}\label{III.14}
\psi_{\boldsymbol{q},\boldsymbol{n}}^{(M)}(0)=C_{\boldsymbol{q},\boldsymbol{n}}^{(M)}N!\prod_{1\leq a<b\leq N}(\xi_{a}-\xi_{b})\;.
\end{equation}
Secondly we have to evaluate (\ref{psi q n M}) at $x_j = X + x\sigma_j$ and to perform the sum over all  ``spin" configurations $\boldsymbol{\sigma} = \{\sigma_1,...,\sigma_N\}$, for which purpose we use that (\ref{psi q n M}) can be written more compactly by using the special structure of the complex wave numbers, see  \cite{D10.2}, Sect. B.2. We introduce the cluster counting function $\alpha: [1,...,N]  \to 
[1,...,M]$ by
\begin{equation}
\alpha(a) = \beta \quad \mathrm{for} \quad n^{(\beta -1)} < a \leq n^{(\beta)}\;,\quad \beta = 1,...,M\;,
\end{equation}
and the $\beta$-th cluster by
\begin{equation}
\Omega_{\beta}(p)=\{a\,|\, \alpha(p(a))={\beta}\}\;.
\end{equation}
Then, working out the derivatives in \cite{D10.2}, Eq. (B.28),
\begin{eqnarray}
&&\fl\quad\psi_{\boldsymbol{q},\boldsymbol{n}}^{(M)}(x_1,...,x_N) = C_{\boldsymbol{q},\boldsymbol{n}}^{(M)} {\sum_{p\in\mathcal{P}}}'\sgn(p)\!\!\!\!\prod_{\substack{1\leq a ,b\leq N\\ \alpha(p(a)) \neq \alpha(p(b))}}\!\!\!\!\big(q_{\alpha(p(a))} - q_{\alpha(p(b))} + \rmi \eta(x_a,x_b)\big)\nonumber\\    
&&\fl\quad\qquad\qquad\qquad\qquad\qquad\times \exp \Big[\rmi \sum_{\alpha = 1}^M q_{\alpha} \sum_{c \in \Omega_{\alpha}(p) }x_c - \frac{1}{4} \sum_{\alpha = 1}^M \sum_{c,c' \in \Omega_{\alpha}(p)}|x_c - x_{c'}| \Big]\;.
\end{eqnarray}
Here the sum over permutations is understood modulo permutations inside each cluster, as indicated by $'$,
and $\eta$ is defined by
\begin{equation}
\fl\qquad\eta(x_a,x_b) = \sgn(x_a - x_b) + \frac{1}{2}\Big( \!\!\!\!\!\! \sum_{\substack{c \in \Omega_{\alpha}(p)\\c \neq a \in \Omega_{\alpha}(p)}} \!\!\!\!\!\!\!\! \sgn(x_a - x_c)  - \!\!\!\!\!\! \sum_{\substack{c \in \Omega_{\alpha}(p)\\c \neq b \in \Omega_{\alpha}(p)}} \!\!\!\!\!\!\!\! \sgn(x_b - x_c) \Big)\;.
\end{equation} 
We spread the particle configuration as
\begin{equation}
x_a = X + x\sigma_a +\varepsilon a\;.
\end{equation}
Then
\begin{eqnarray}
\label{III.20}
&&\fl\qquad\sum_{\boldsymbol{\sigma}}\Big(\prod_{i=1}^{N}\rme^{-\frac{1}{2}\left[(1-\sigma_{i})s_{1}+(1+\sigma_{i})s_{2}\right]}\Big)\psi_{\boldsymbol{q},\boldsymbol{n}}^{(M)}(X+\sigma_{1}x +\varepsilon,\ldots,X+\sigma_{N}x +N\varepsilon)\nonumber\\
&&\fl\qquad=C_{\boldsymbol{q},\boldsymbol{n}}^{(M)}\sum_{\boldsymbol{\sigma}}{\sum_{p\in\mathcal{P}}}' \sgn(p) \!\!\!\!\!\! \prod_{\substack{1\leq a ,b\leq N\\ \alpha(p(a)) \neq \alpha(p(b))}} \!\!\!\!\!\! \big(q_{\alpha(p(a))} - q_{\alpha(p(b))} + \rmi \eta(x_a,x_b)\big)\;\rme^{\phi(\boldsymbol{\sigma},p)}\;.
\end{eqnarray}
The phase $\phi(\boldsymbol{\sigma},p)$ is given by
\begin{eqnarray}
&&\fl\qquad\phi(\boldsymbol{\sigma},p) = \sum_{\alpha = 1}^M\Big(-\tfrac{1}{2}(s_1 + s_2)n_{\alpha} +\tfrac{1}{2}(s_1 - s_2)m_{\alpha}(\boldsymbol{\sigma},p)  
\nonumber\\
&&\fl\qquad\qquad\qquad\qquad\qquad + \rmi q_{\alpha}(X n_{\alpha} +x m_{\alpha}(\boldsymbol{\sigma},p))-\tfrac{1}{2}|x|(n_{\alpha}^{2} - m_{\alpha}(\boldsymbol{\sigma},p)^{2} )\Big)\;,
\end{eqnarray}
where we introduced
\begin{equation}
m_{\alpha}(\boldsymbol{\sigma},p) =  \sum_{c\in\Omega_{\alpha}(p)}\sigma_c\;.
\end{equation}
Inserting in (\ref{III.20}) one arrives at
\begin{eqnarray}
\label{III.22}
&&\fl\hspace{30pt}\sum_{\boldsymbol{\sigma}}\Big(\prod_{i=1}^{N}\rme^{-\frac{1}{2}\left[(1-\sigma_{i})s_{1}+(1+\sigma_{i})s_{2}\right]}\Big)\psi_{\boldsymbol{q},\boldsymbol{n}}^{(M)}(X+\sigma_{1}x +\varepsilon,\ldots,X+\sigma_{N}x +N\varepsilon)
\nonumber\\
&&\fl\hspace{30pt}=C_{\boldsymbol{q},\boldsymbol{n}}^{(M)}\sum_{\boldsymbol{\sigma}}{\sum_{p\in\mathcal{P}}}'
\sgn(p) \prod_{\substack{1\leq a ,b\leq N\\ \alpha(p(a)) \neq \alpha(p(b))}}\big(q_{\alpha(p(a))} - q_{\alpha(p(b))} +
\rmi \eta(\sigma_a +\varepsilon a,\sigma_b + \varepsilon b)\big)
\nonumber\\
&&\fl\hspace{50pt} \times\prod_{\alpha=1}^{M}\prod_{c\in\Omega_{\alpha}(p)}\exp\big[-\tfrac{1}{2}(s_1 + s_2)n_{\alpha} +\tfrac{1}{2}(s_1 - s_2)m_{\alpha}(\boldsymbol{\sigma},p) 
\nonumber\\
&&\fl\hspace{140pt} 
+ \rmi q_{\alpha}(X n_{\alpha} +x m_{\alpha}(\boldsymbol{\sigma},p)) -\tfrac{1}{2}|x|(n_{\alpha}^{2} - m_{\alpha}(\boldsymbol{\sigma},p)^{2} )\big]\;.
\end{eqnarray}
Let us shorthand the right hand side of Eq. (\ref{III.22}) as 
\begin{equation}
\sum_{\boldsymbol{\sigma}}\sum_{p\in\mathcal{P}} f_1(\boldsymbol{\sigma},p)f_2(\boldsymbol{\sigma},p)\;.
\end{equation}
By Lemma 1
\begin{equation}
\sum_{p\in\mathcal{P}} f_1(\boldsymbol{\sigma},p) = \psi_{\boldsymbol{q},\boldsymbol{n}}^{(M)}(0)
\end{equation}
not depending on $\boldsymbol{\sigma}$ and, since $f_2(\boldsymbol{\sigma},p)$ depends on $\boldsymbol{\sigma}$ only through the $m_{\alpha}(\boldsymbol{\sigma},p)$'s, correspondingly
\begin{equation}
\sum_{\boldsymbol{\sigma}}f_2(\boldsymbol{\sigma},p) = \tilde{c}
\end{equation}
with $\tilde{c}$ not depending on $p$. Unfortunately we could not discover any further simplification. To proceed anyhow a natural step is to factorize (\ref{III.22}) either with respect to $p$ or with respect to $\boldsymbol{\sigma}$, both leading  in approximation to 
\begin{eqnarray}
\label{III.23}
&&\fl\hspace{10pt}\sum_{\boldsymbol{\sigma}}\Big(\prod_{i=1}^{N}\rme^{-\frac{1}{2}\left[(1-\sigma_{i})s_{1}+(1+\sigma_{i})s_{2}\right]}\Big)\big(\psi_{\boldsymbol{q},\boldsymbol{n}}^{(M)}(X+\sigma_{1}x +\varepsilon,\ldots,X+\sigma_{N}x +N\varepsilon)/
\psi_{\boldsymbol{q},\boldsymbol{n}}^{(M)}(0)\big)\nonumber\\
&&\fl\hspace{40pt}\simeq \sum_{\boldsymbol{\sigma}}\prod_{\alpha = 1}^{M}\exp\big[-\tfrac{1}{2}(s_1 + s_2)n_{\alpha} +\tfrac{1}{2}(s_1 - s_2)m_{\alpha}(\boldsymbol{\sigma})\nonumber\\
&&\fl\hspace{90pt} + \rmi q_{\alpha}(X n_{\alpha} +x m_{\alpha}(\boldsymbol{\sigma})) -\tfrac{1}{2}|x|(n_{\alpha}^{2} - m_{\alpha}(\boldsymbol{\sigma})^{2} )
\big]\;,
\end{eqnarray}
where
\begin{equation}
m_{\alpha}(\boldsymbol{\sigma}) =  \sum_{r_{\alpha} = 1}^{n_{\alpha}}\sigma_{n^{(\alpha -1)}+r_{\alpha}}\;.
\end{equation}
For small cluster sizes we checked that (\ref{III.23}) is indeed not a strict equality.


\section{Two-point generating function}
\label{Section Fredholm G1}
\setcounter{equation}{0}

\subsection{Linearization and ``spin" summation}
We linearize the terms quadratic in $n_{\alpha}$ and in $m_{\alpha}(\boldsymbol{\sigma})$ in the exponential 
of (\ref{III.23}), so to be able to perform the summation over the $\sigma_{i}$'s and the $n_{\alpha}$'s. For this purpose we use the identity
\begin{equation}
\rme^{au+bv+cuv}=\rme^{c\partial_{a}\partial_{b}}\rme^{au+bv}\;,
\end{equation}
which can be checked by expanding both sides of the equation as a formal power series in $c$, and obtain
\begin{eqnarray}
&&\fl\quad\sum_{\boldsymbol{\sigma}}\Big(\prod_{i=1}^{N}\rme^{-\frac{1}{2}\left[(1-\sigma_{i})s_{1}+(1+\sigma_{i})s_{2}\right]}\Big)\psi_{\boldsymbol{q},\boldsymbol{n}}^{(M)}(X+\sigma_{1}x,\ldots,X+\sigma_{N}x)/\psi_{\boldsymbol{q},\boldsymbol{n}}^{(M)}(0)\nonumber\\
&&\fl\quad\simeq \sum_{\boldsymbol{\sigma}}\prod_{\alpha = 1}^{M}\rme^{-x\partial_{1}\partial_{2}}\exp\big[-\tfrac{s_{1}}{2}(n_{\alpha}-m_{\alpha}(\boldsymbol{\sigma}))-\tfrac{s_{2}}{2}(n_{\alpha}+m_{\alpha}(\boldsymbol{\sigma}))+\rmi q_{\alpha}(Xn_{\alpha}+xm_{\alpha}(\boldsymbol{\sigma}))\big]\;.\nonumber\\
&&
\end{eqnarray}
Here we introduced the convention
\begin{equation}
\partial_1 = \partial_{s_1}\;,\quad\partial_2 = \partial_{s_2}\;,
\end{equation}
which will be used onwards. We note that the exponential inside the product over $\alpha$ depends only on the $\sigma_{a}$ with  $n^{(\alpha-1)}  < a \leq n^{(\alpha)}$. Thus, the summation over the $\sigma_{a}$ can be performed independently for each $\alpha$. Recalling (\ref{X x}), we find
\begin{eqnarray}
\label{IV.2}
&&\fl\qquad\sum_{\boldsymbol{\sigma}}\prod_{\alpha = 1}^{M}\rme^{-x\partial_{1}\partial_{2}}\exp\big[-\tfrac{s_{1}}{2}(n_{\alpha}-m_{\alpha}(\boldsymbol{\sigma}))-\tfrac{s_{2}}{2}(n_{\alpha}+m_{\alpha}(\boldsymbol{\sigma}))+\textit{i}q_{\alpha}(Xn_{\alpha}+xm_{\alpha}(\boldsymbol{\sigma}))\big]
\nonumber\\
&&\fl\qquad = \prod_{\alpha = 1}^{M}\rme^{-x\partial_{1}\partial_{2}}\Big(\rme^{\rmi x_{1}q_{\alpha}-s_{1}}+\rme^{\rmi x_{2}q_{\alpha}-s_{2}}\Big)^{n_{\alpha}}\;.
\end{eqnarray}

\subsection{Fredholm determinant}
We now return to the generating function $G_{1}$ of (\ref{G1[psi]}), denoting by $G_{1}^{\sharp}$ its approximation under the factorization assumption (\ref{III.23}). The eigenfunctions $\psi_{\boldsymbol{q},\boldsymbol{n}}^{(M)}$ are normalized in such a way that they form an orthonormal basis in the symmetric subspace of $L^{2}(\mathbb{R}^N)$. With this normalization, and using (\ref{III.14}),
\begin{equation}
|\psi_{\boldsymbol{q},\boldsymbol{n}}^{(M)}(0,\ldots,0)|^{2}=N!\det\left(\frac{1}{\frac{1}{2}(n_{j}+n_{k})+i(q_{j}-q_{k})}\right)_{j,k=1,\ldots,M}\;,
\end{equation}
see Eqs. (B.58) and (34) of \cite{D10.2}. From Eq. (B.29) of \cite{D10.2}, the energy $E_{\boldsymbol{q},\boldsymbol{n}}^{(M)}$ of the eigenstate $\psi_{\boldsymbol{q},\boldsymbol{n}}^{(M)}$ is 
\begin{equation}
E_{\boldsymbol{q},\boldsymbol{n}}^{(M)}=\frac{1}{2}\sum_{j=1}^{M}n_{j}q_{j}^{2}-\frac{1}{24}\sum_{j=1}^{M}(n_{j}^{3}-n_{j})\;.
\end{equation}
Finally, the properly normalized sum over the eigenstates is given by 
\begin{equation}
\sum_{r}\equiv\sum_{M=1}^{\infty}\frac{1}{M!}\prod_{j=1}^{M}\Big(\int_{-\infty}^{\infty}\frac{\rmd q_{j}}{2\pi}\sum_{n_{j}=1}^{\infty}\Big)\openone_{\{n=\sum_{j=1}^M n_{j}\}}\;,
\end{equation}
see Eqs. (B.53) and (B.60) of \cite{D10.2}. Combining all, we obtain for $G_{1}^\sharp$ the following expression
\begin{eqnarray}
\label{G1[M,qj,nj]}
&&\fl\qquad G_{1}^\sharp = 1+\sum_{M=1}^{\infty}\frac{1}{M!}\prod_{j=1}^{M}\Big(\int_{-\infty}^{\infty}\frac{\rmd q_{j}}{2\pi}\sum_{n_{j}=1}^{\infty}\Big)\det\left(\frac{1}{\frac{1}{2}(n_{j}+n_{k})+\rmi(q_{j}-q_{k})}\right)_{j,k=1,\ldots,M}\nonumber\\
&&\fl\qquad\qquad\qquad\qquad\qquad\times\prod_{j=1}^{M}\rme^{tn_{j}^{3}/24}\rme^{-x\partial_{1}\partial_{2}}\big[-\rme^{-tq_{j}^{2}/2}\left(\rme^{\rmi x_{1}q_{j}-s_{1}}+\rme^{\rmi x_{2}q_{j}-s_{2}}\right)\big]^{n_{j}}\;.
\end{eqnarray}
We observe that (\ref{G1[M,qj,nj]}) can be rewritten as a Fredholm determinant (see \ref{Appendix Fredholm} for a few basic facts about Fredholm determinants). Indeed, if one introduces the kernel $R$ as
\begin{equation}
\label{kernel R}
R(q,n;q',n')=\frac{1}{2\pi}\,\frac{\rme^{tn^{3}/24}\rme^{-x\partial_{1}\partial_{2}}\big[-\rme^{-\frac{t}{2}q^{2}}\left(\rme^{\rmi x_{1}q-s_{1}}+\rme^{\rmi x_{2}q-s_{2}}\right)\big]^{n}}{\frac{1}{2}(n+n')+i(q-q')}\;,
\end{equation}
then the generating function $G_{1}^\sharp$ is given by
\begin{equation}
G_{1}^\sharp=\det(1+R)\;.
\end{equation}
More explicitly, it holds
\begin{equation}
\fl\qquad G_{1}^\sharp = 1+\sum_{M=1}^{\infty}\frac{1}{M!}\int_{-\infty}^{\infty}\rmd q_{1}\ldots\rmd q_{M}\sum_{n_{1},\ldots,n_{M}=1}^{\infty}\det\big(R(q_{j},n_{j};q_{k},n_{k})\big)_{j,k=1,\ldots,M}\;.
\end{equation}
We perform the summation over the $n_{j}$'s and the integration over the $q_{j}$'s. For this purpose the integrated version for the denominator in (\ref{kernel R}) is used,
\begin{equation}
\frac{1}{\frac{1}{2}(n+n')+\rmi(q-q')}=\int_{0}^{\infty}\rmd z\,\rme^{-z\left[\frac{1}{2}(n+n')+\rmi(q-q')\right]}\;.
\end{equation}
The operator $R$ can then be written as a product of two operators, $R=R_{1}R_{2}$,
\begin{equation}
R(q,n;q',n')=\int_{-\infty}^{\infty}\rmd z\,R_{1}(q,n;z)R_{2}(z;q',n')\;,
\end{equation}
with
\begin{equation}
\fl\qquad R_{1}(q,n;z)=\openone_{\{z>0\}}\,\rme^{-\rmi qz}\rme^{tn^{3}/24}\rme^{-x\partial_{1}\partial_{2}}\big[-\rme^{-\frac{1}{2}z}\rme^{-tq^{2}/2}\left(\rme^{\rmi x_{1}q-s_{1}}+\rme^{\rmi x_{2}q-s_{2}}\right)\big]^{n}\;,
\end{equation}
and
\begin{equation}
R_{2}(z;q',n')=\openone_{\{z>0\}}\,\frac{1}{2\pi}\,\rme^{-\frac{1}{2}n'z}\rme^{\rmi q'z}\;.
\end{equation}
Using $\det(1+R_{1}R_{2})=\det(1+R_{2}R_{1})$, the generating function $G_{1}^\sharp$ becomes equal to a Fredholm determinant of the new operator $N$,
\begin{equation}
G_{1}^\sharp= \det(1+N)\;,
\end{equation}
where $N=R_{2}R_{1}$ with kernel
\begin{equation}
\label{kernel N}
N(z,z')=\int_{-\infty}^{\infty}\rmd q\,\sum_{n=1}^{\infty}R_{2}(z;q,n)R_{1}(q,n;z')\;.
\end{equation}
The variables $q_{j}$ and $n_{j}$, which were previously the variables corresponding to the definition of the Fredholm determinant, are now inside the kernel $N$.

In \ref{Appendix sum q n}, the summation over $n$ and the integration over $q$ in ({\ref{kernel N})
is performed explicitly. Most of the steps are rather similar to the computations done in \cite{D10.1,CLDR10.1,D10.2} in case of the one-point generating function. Note that on face value the sum over $n$ is badly divergent. In terms of the parameter $\alpha=(t/2)^{1/3}$ (equal to $2^{2/3}\lambda$ in the notation of \cite{D10.2}) and of the function $\Phi$ defined in equation (\ref{Phi}), the kernel of $N$ equals
\begin{equation}
\label{Ntilde[N]}
N(z,z') = -\alpha^{-1}\tilde{N}(\alpha^{-1} z,\alpha^{-1} z)\;,
\end{equation}
with
\begin{eqnarray}
\label{kernel Ntilde}
&&\fl\qquad\qquad\tilde{N}(z,z')=\openone_{\{z,z'>0\}}\int_{-\infty}^{\infty}\rmd u\,\rme^{-x\partial_{1}\partial_{2}}\rme^{-(2\alpha)^{-1}(x_{1}\partial_{1}+x_{2}\partial_{2})(\partial_{z}-\partial_{z'})}\nonumber\\
&&\fl\qquad\qquad\qquad\qquad\qquad\qquad\qquad\qquad\Phi(\alpha u-s_{1},\alpha u-s_{2})\Ai(z+u)\Ai(u+z')\;.
\end{eqnarray}
The generating function $G_{1}^\sharp$ is now given by
\begin{equation}
G_{1}^\sharp =\det(1-\tilde{N})\;.
\end{equation}
We will simplify the kernel $\tilde{N}$ and express it in terms of the Airy Hamiltonian $H$ and of the Airy operator $K$.

\subsection{Subtraction of the parabolic profile}
The following transformations are guided to have the shift by $x^{2}/2t$ manifestly visible in $G_1^\sharp$. Using the definition (\ref{X x}) of $X$ and $x$, the expression (\ref{kernel Ntilde}) of the kernel $\tilde{N}$ rewrites as
\begin{eqnarray}
\label{Ntilde[X]}
&&\fl\qquad\tilde{N}(z,z')=\openone_{\{z,z'>0\}}\,\rme^{-x\partial_1\partial_2}\rme^{\frac{x}{2\alpha}(\partial_{1}-\partial_{2})(\partial_{z}-\partial_{z'})}\nonumber\\
&&\fl\qquad\qquad\qquad\int_{-\infty}^{\infty}\rmd u\,\rme^{-\frac{X}{2\alpha}(\partial_{1}+\partial_{2})(\partial_{z}-\partial_{z'})}\Phi(\alpha u-s_{1},\alpha u-s_{2})\Ai(z+u)\Ai(u+z')\;.
\end{eqnarray}
We note that   $\partial_{1}+\partial_{2}$ in (\ref{Ntilde[X]}) can replaced by $-\alpha^{-1}\partial_{u}$, where the derivative $\partial_{u}$ acts only on $\Phi(\alpha u-s_{1},\alpha u-s_{2})$ and not on the product of Airy functions. One can then make $\partial_{u}$ to act only on the product of Airy functions by integrating by parts. Thereby
\begin{eqnarray}
&&\fl\qquad\tilde{N}(z,z')=\openone_{\{z,z'>0\}}\,\rme^{-x\partial_{1}\partial_{2}}\rme^{\frac{x}{2\alpha}(\partial_{1}-\partial_{2})(\partial_{z}-\partial_{z'})}\nonumber\\
&&\fl\qquad\qquad\int_{-\infty}^{\infty}\rmd u\,\Phi(\alpha u-s_{1},\alpha u-s_{2})\rme^{-(2\alpha^{2})^{-1}X\partial_{u}(\partial_{z}-\partial_{z'})}\Ai(z+u)\Ai(u+z')\;.
\end{eqnarray}
Since the Airy function is a solution of the differential equation $\Ai''(u)=u\Ai(u)$, we have
\begin{equation}
\partial_{u}(\partial_{z}-\partial_{z'})\Ai(z+u)\Ai(u+z')=(z-z')\Ai(z+u)\Ai(u+z')
\end{equation}
and hence
\begin{equation}
\rme^{a[\partial_{u}(\partial_{z}-\partial_{z'})-(z-z')]}\Ai(z+u)\Ai(u+z')=\Ai(z+u)\Ai(u+z')\;.
\end{equation}
The commutator of $\partial_{u}(\partial_{z}-\partial_{z'})$ and $(z-z')$ is equal to $2\partial_{u}$ and it commutes with both $\partial_{u}(\partial_{z}-\partial_{z'})$ and $(z-z')$. Thus the Baker-Campbell-Hausdorff formula terminates and
\begin{equation}
\fl\qquad\qquad \rme^{-a\partial_{u}(\partial_{z}-\partial_{z'})}\Ai(z+u)\Ai(u+z')=\rme^{-a(z-z')}\rme^{a^{2}\partial_{u}}\Ai(z+u)\Ai(u+z')\;.
\end{equation}
We use this property in the expression of the kernel $\tilde{N}$, and integrate by parts to make $\partial_{u}$ act again on $\Phi(\alpha u-s_{1},\alpha u-s_{2})$ with the result
\begin{eqnarray}
\fl\qquad\qquad\tilde{N}(z,z') &=& \openone_{\{z,z'>0\}}\,\rme^{-x\partial_{1}\partial_{2}}\rme^{\frac{x}{2\alpha}(\partial_{1}-\partial_{2})(\partial_{z}-\partial_{z'})}\int_{-\infty}^{\infty}\rmd u\,\rme^{-(2\alpha^{2})^{-1}X(z-z')}\Ai(z+u)\nonumber\\
\fl\qquad\qquad && \times\Ai(u+z')\rme^{-(4\alpha^{4})^{-1}X^{2}\partial_{u}}\Phi(\alpha u-s_{1},\alpha u-s_{2})\;.
\end{eqnarray}
Using the commutation relation
\begin{equation}
\rme^{a(\partial_{z}-\partial_{z'})}\rme^{-b(z-z')}=\rme^{-2ab}\rme^{-b(z-z')}\rme^{a(\partial_{z}-\partial_{z'})}\;,
\end{equation}
one obtains
\begin{eqnarray}
\fl\qquad \tilde{N}(z,z') &=& \openone_{\{z,z'>0\}}\,\rme^{-(2\alpha^{2})^{-1}X(z-z')}\rme^{-x\partial_{1}\partial_{2}}\rme^{\frac{x}{2\alpha}(\partial_{1}-\partial_{2})(\partial_{z}-\partial_{z'})} \int_{-\infty}^{\infty}\rmd u\,\Ai(z+u)\nonumber\\
\fl\qquad && \times\Ai(u+z')\rme^{-(2\alpha^{3})^{-3}Xx(\partial_{1}-\partial_{2})}\rme^{-(4\alpha^{4})^{-1}X^{2}\partial_{u}}\Phi(\alpha u-s_{1},\alpha u-s_{2})\;.
\end{eqnarray}
Since $\rme^{a\partial}$ acts as a shift operator, we arrive at
\begin{eqnarray}
\fl \tilde{N}(z,z') &=& \openone_{\{z,z'>0\}}\,\rme^{-(2\alpha^{2})^{-1}X(z-z')}\rme^{-x\partial_{1}\partial_{2}}\rme^{\frac{x}{2\alpha}(\partial_{1}-\partial_{2})(\partial_{z}-\partial_{z'})}\rme^{-(4\alpha^{3})^{-1}x_{2}^{2}(\partial_{1}+\partial_{2})}\nonumber\\
\fl && \times\int_{-\infty}^{\infty}\rmd u\,\Ai(z+u)\Ai(u+z')\,\Phi\left(\alpha u-s_{1}- \tfrac{1}{2t}x_{1}^{2},\alpha u-s_{2}-\tfrac{1}{2t}x_{2}^{2}\right)\;.
\end{eqnarray}
The factor $\exp\big(-(2\alpha^{2})^{-1}X(z-z')\big)$ can be eliminated by a similarity transformation of the kernel, it does not contribute to the Fredholm determinant. We define the shifted generating function
\begin{equation}
G_2^\sharp(u,v;x,t) = G_1^\sharp(u-\tfrac{1}{2t}x_{1}^{2},v-\tfrac{1}{2t}x_{2}^{2};x_1,x_2,t)\;.
\end{equation}
and find that $G_{2}^\sharp$ can be written as
\begin{equation}
\label{G1[L]}
G_{2}^\sharp=\det(1-L)\;,
\end{equation}
where the kernel $L$ is given by
\begin{eqnarray}
\label{kernel L}
\fl\qquad L(z,z') &=& \openone_{\{z,z'>0\}}\,\rme^{-x\partial_{1}\partial_{2}}\rme^{\frac{x}{2\alpha}(\partial_{1}-\partial_{2})(\partial_{z}-\partial_{z'})}\rme^{-(4\alpha^{3})^{-1}x^{2}(\partial_{1}+\partial_{2})}\nonumber\\
\fl\qquad && \times\int_{-\infty}^{\infty}\rmd u\,\Ai(z+u)\Ai(u+z')\,\Phi\left(\alpha u-s_{1},\alpha u-s_{2}\right)\;.
\end{eqnarray}
In this form, the kernel depends only on $x$ (and $t$), as to be expected from the discussion in the introduction.

\subsection{Rewriting of the kernel \texorpdfstring{$L$}{L} in terms of the Airy Hamiltonian}
The next step is to eliminate the operator $\partial_{1}\partial_{2}$ from the expression (\ref{kernel L}) for the kernel $L$, for which purpose one writes
\begin{eqnarray}
\fl\qquad L(z,z') &=& \openone_{\{z,z'>0\}}\int_{-\infty}^{\infty}\rmd u\,\rme^{\frac{x}{2}(\partial_{1}^{2}+\partial_{2}^{2})}\rme^{-\frac{x}{2}(\partial_{1}+\partial_{2})^{2}}\rme^{-(4\alpha^{3})^{-1}x^{2}(\partial_{1}+\partial_{2})}\rme^{\frac{x}{2\alpha}(\partial_{1}-\partial_{2})(\partial_{z}-\partial_{z'})}\nonumber\\
\fl\qquad && \times\Phi (\alpha u-s_{1},\alpha u-s_{2})\Ai(z+u)\Ai(u+z')\;.
\end{eqnarray}
In this expression, we can replace $(\partial_{1}+\partial_{2})^{2}$ by $\alpha^{-2}\partial_{u}^{2}$, where $\partial_{u}$ acts only on $\Phi (\alpha u-s_{1},\alpha u-s_{2})$ and not on the product of Airy functions. Then, integrating by parts, $\partial_{u}$ acts on the product of Airy functions instead. Thereby
\begin{eqnarray}
\fl\qquad L(z,z') &=& \openone_{\{z,z'>0\}}\int_{-\infty}^{\infty}\rmd u\,\rme^{\frac{x}{2}(\partial_{1}^{2}+\partial_{2}^{2})}\rme^{-(4\alpha^{3})^{-1}x^{2}(\partial_{1}+\partial_{2})}\rme^{\frac{x}{2\alpha}(\partial_{1}-\partial_{2})(\partial_{z}-\partial_{z'})}\nonumber\\
\fl\qquad && \times\Phi (\alpha u-s_{1},\alpha u-s_{2})\rme^{- (2\alpha^{2})^{-1}x \partial_{u}^{2}}\Ai(z+u)\Ai(u+z')\;.
\end{eqnarray}
But $\partial_{u}$ acting on $\Ai(z+u)\Ai(u+z')$ is the same as $\partial_{z}+\partial_{z'}$. Thus, one can replace $\exp\!\big(- (2\alpha^{2})^{-1}x \partial_{u}^{2}\big)$ by $\exp\!\big(- (2\alpha^{2})^{-1}x \partial_{u}(\partial_{z}+\partial_{z'})\big)$. After integrating by parts again, we replace $\exp\!\big( (2\alpha^{2})^{-1}x \partial_{u}(\partial_{z}+\partial_{z'})\big)$ by $\exp\!\big(-\frac{x}{2\alpha}(\partial_{1}+\partial_{2})(\partial_{z}+\partial_{z'})\big)$ and obtain
\begin{eqnarray}
\fl\qquad L(z,z') &=& \openone_{\{z,z'>0\}}\int_{-\infty}^{\infty}\rmd u\,\rme^{\frac{x}{2}(\partial_{1}^{2}+\partial_{2}^{2})}\rme^{-(4\alpha^{3})^{-1}x^{2}(\partial_{1}+\partial_{2})}\rme^{-\frac{x}{\alpha}(\partial_{1}\partial_{z'}+\partial_{2}\partial_{z})}\nonumber\\
\fl\qquad && \times\Phi (\alpha u-s_{1},\alpha u-s_{2})\Ai(z+u)\Ai(u+z')\;.
\end{eqnarray}
We now introduce a dummy integration to achieve that $\alpha u-s_{1}$, $\alpha u-s_{2}$, $\Ai(z+u)$ and $\Ai(u+z')$ do not all depend on the same variable $u$. It holds
\begin{eqnarray}
\fl\qquad L(z,z') &=& \openone_{\{z,z'>0\}}\int_{-\infty}^{\infty}\rmd u\,\rmd v\,\delta(u-v)\rme^{\frac{x}{2}(\partial_{1}^{2}-\frac{2}{\alpha}\partial_{1}\partial_{z'})}\rme^{\frac{x}{2}(\partial_{2}^{2}-\frac{2}{\alpha}\partial_{2}\partial_{z})}\rme^{-(4\alpha^{3})^{-1}x^{2}(\partial_{1}+\partial_{2})}\nonumber\\
\fl\qquad && \times\Phi (\alpha u-s_{1},\alpha v-s_{2})\Ai(z+v)\Ai(u+z')\;.
\end{eqnarray}
Note that
\begin{equation}
\fl\left[\left(\partial_{1}^{2}-2\alpha^{-1}\partial_{1}\partial_{z'}\right)+ \alpha^{-2}(z'+H_{u})\right]\Phi (\alpha u-s_{1},\alpha v-s_{2})\Ai(z+v)\Ai(u+z')=0\;,
\end{equation}
where $H_{u}=-\partial_{u}^{2} +u$ denotes the Airy Hamiltonian acting on the variable $u$. The commutation relation
\begin{equation}
\big[(\partial_{1}^{2} - 2\alpha^{-1}\partial_{1}\partial_{z'}),\alpha^{-2}(z'+H_{u})\big]=-2\alpha^{-3}\partial_{1}
\end{equation}
together with the Baker-Campbell-Hausdorff formula implies
\begin{eqnarray}
\fl\qquad &&\rme^{\frac{x}{2}(\partial_{1}^{2}-\frac{2}{\alpha}\partial_{1}\partial_{z'})}\,\Phi (\alpha u-s_{1},\alpha v-s_{2})\Ai(z+v)\Ai(u+z')\nonumber\\
\fl\qquad && = \rme^{- (2\alpha^{2})^{-1}x (z'+H_{u})}\rme^{(4\alpha^{3})^{-1}x^{2}\partial_{1}}\,\Phi (\alpha u-s_{1},\alpha v-s_{2})\Ai(z+v)\Ai(u+z')\;.
\end{eqnarray}
A corresponding relation is obtained by interchanging the roles of $z$, $s_{1}$, $u$ and of $z'$, $s_{2}$, $v$. Using both the kernel $L$ becomes
\begin{eqnarray}
\fl\qquad L(z,z') &=& \openone_{\{z,z'>0\}}\int_{-\infty}^{\infty}\rmd u\,\rmd v\,\delta(u-v)\rme^{- (2\alpha^{2})^{-1}x (z+z')}\rme^{- (2\alpha^{2})^{-1}x (H_{u}+H_{v})}\nonumber\\
\fl\qquad && \times\Phi(\alpha u-s_{1},\alpha v-s_{2})\Ai(z+v)\Ai(u+z')\;.
\end{eqnarray}
A final integration by parts over $u$ and $v$ yields
\begin{eqnarray}
\label{kernel L final}
\fl\qquad L(z,z') &=& \openone_{\{z,z'>0\}}\int_{-\infty}^{\infty}\rmd u\,\rmd v\,\langle u|\rme^{-\alpha^{-2}xH}|v\rangle\nonumber\\
\fl\qquad && \times\Phi(\alpha u-s_{1},\alpha v-s_{2})\rme^{- (2\alpha^{2})^{-1}x (z+z')}\Ai(z+v)\Ai(u+z')\;.
\end{eqnarray}
This last expression is an operator product of the form ABA. Hence using the cyclicity of the determinant one arrives at
\begin{equation}
G_{2}^\sharp = \det(1-L)=\det(1-\tilde{L})\;,
\end{equation}
where
\begin{equation}
\label{kernel Ltilde}
\langle u |\tilde{L} | v\rangle= \langle u|\rme^{-\alpha^{-2}xH}|v\rangle\Phi (\alpha u-s_{1},\alpha v-s_{2}) \langle u |\rme^{\alpha^{-2}xH}K |v\rangle\;.
\end{equation}
We conclude that $\tilde{L} = Q\,\rme^{\alpha^{-2}|x|H}K$, as to be shown.


\section{Finite time probability density function}
\label{Section finite time}
\setcounter{equation}{0}
The two-point free energy fluctuations can be written as
\begin{equation}
F(x_{j},t)=\frac{1}{24}t+\frac{1}{2t}x_{j}^{2}+\xi_{j}(t),\quad j=1,2\;,
\end{equation}
with random amplitudes $\xi_{j}(t)$. In the factorization approximation, we know already that $\xi_{j}(t)=\O{t^{1/3}}$. The joint distribution depends only on $|x_{1}-x_{2}|$, and nondegenerate correlations occur for a separation of order $t^{2/3}$. Following the procedure in \cite{CLDR10.1}, we would like to extract the underlying pdf from the generating function $G^{\sharp}$. Let us denote by $\rho^{\sharp}_{t}(w_{1},w_{2})$ the approximate joint pdf of $\xi_{1}(t)$ and $\xi_{2}(t)$, and write it as the convolution with two independent Gumbel densities $F_{\text{Gu}}'$:
\begin{equation}
\label{rhot[gt]}
\rho_{t}^\sharp(w_{1},w_{2})=\int_{-\infty}^{\infty}\rmd v_{1}\,\rmd v_{2}\,F_{\text{Gu}}'(w_{1}-v_{1})F_{\text{Gu}}'(w_{2}-v_{2})g_{t}(v_{1},v_{2})\;,
\end{equation}
where
\begin{equation}
F_{\text{Gu}}(w)=\exp(-\rme^{-w})\;,
\end{equation}
compare with the one-point distribution (\ref{distrib 1pt F(x,t)}). Eq. (\ref{rhot[gt]}) defines the yet to be determined function $g_{t}$, which is normalized to $1$ by construction. From numerical solutions in the one-point case, we know that $g_{t}$ of (\ref{gt 1pt}) is in general not everywhere positive. This implies that $g_{t}$ of (\ref{rhot[gt]}) will not be a pdf, in general.

The generating function $G_{2}^\sharp$ reads
\begin{equation}
G_{2}^\sharp=\int_{-\infty}^{\infty}\rmd w_{1}\,\rmd w_{2}\,\rho_{t}(w_{1},w_{2})\exp(-\rme^{-s_{1}-w_{1}}-\rme^{-s_{2}-w_{2}})\;.
\end{equation}
Inserting (\ref{rhot[gt]}) and performing the integration over $w_{1}$ and $w_{2}$ yields
\begin{equation}
\label{eq gt L}
G_{2}^\sharp=\int_{-\infty}^{\infty}\rmd v_{1}\,\rmd v_{2}\,g_{t}(v_{1},v_{2})\,\frac{\rme^{v_{1}}}{\rme^{v_{1}}+\rme^{-s_{1}}}\,\frac{\rme^{v_{2}}}{\rme^{v_{2}}+\rme^{-s_{2}}}\;.
\end{equation}
We analytically continue on both sides from $\rme^{-s_{j}}$ to $-\rme^{a_{j}}-\rmi\sigma_{j}\varepsilon$, $\varepsilon>0$, $j=1,2$. From the identity
\begin{equation}
\lim_{\varepsilon\to0}\sum_{\sigma=\pm1}\,\frac{\sigma \rme^{v}}{\rme^{v}-\rme^{a}-\rmi\sigma\varepsilon}=2\mathit{i}\pi\delta(v-a)\;,
\end{equation}
the left side of (\ref{eq gt L}) multiplied by $\sigma_{1}\sigma_{2}$ and summed over $\sigma_{1},\sigma_{2}=\pm1$ yields in the limit $\varepsilon\to0$
\begin{equation}
-4\pi^{2}g_{t}(a_{1},a_{2})\;.
\end{equation}
On the right side of (\ref{eq gt L}), using $G_{2}^\sharp=\det(1-L)$ with $L$ given by (\ref{kernel L final}), we obtain the sum of four Fredholm determinants with operators $L_{\sigma_{1},\sigma_{2}}$, $\sigma_{j}=\pm1$. To compute these kernels, we have to take the limit $\varepsilon\to0$ of
\begin{equation}
\int_{-\infty}^{\infty}\rmd u_{1}\,\rmd u_{2}\,h(u_{1},u_{2})\frac{-\rme^{\alpha u_{1}+a_{1}}-\rme^{\alpha u_{2}+a_{2}}-\rmi\varepsilon(\sigma_{1}\rme^{\alpha u_{1}}+\sigma_{2}\rme^{\alpha u_{2}})}{1-\rme^{\alpha u_{1}+a_{1}}-\rme^{\alpha u_{2}+a_{2}}-\rmi\varepsilon(\sigma_{1}\rme^{\alpha u_{1}}+\sigma_{2}\rme^{\alpha u_{2}})}\;,
\end{equation}
with $h$ general at this stage. Using
\begin{equation}
\frac{1}{y-\rmi\sigma\varepsilon}=\mathcal{P}\Big(\frac{1}{y}\Big)+\rmi\sigma\pi\delta(y)\;,
\end{equation}
one obtains
\begin{eqnarray}
\fl\qquad L_{\sigma_{1},\sigma_{2}}(z,z')  &=& \openone_{\{z,z'>0\}}\int_{-\infty}^{\infty}\rmd u\,\rmd v\,\langle u|\rme^{-\alpha^{-2}xH}|v\rangle \rme^{-(2\alpha^{2})^{-1}x(z+z')}\Ai(z+u)\Ai(v+z')\nonumber\\
\fl\qquad && \times\Big(\mathcal{P}\Big(\frac{-\rme^{\alpha u_{1}+a_{1}}-\rme^{\alpha u_{2}+a_{2}}}{1-\rme^{\alpha u_{1}+a_{1}}-\rme^{\alpha u_{2}+a_{2}}}\Big)+i\pi\sgn(\sigma_{1}\rme^{\alpha u_{1}}+\sigma_{2}\rme^{\alpha u_{2}})\nonumber\\
\fl\qquad && \times(-\rme^{\alpha u_{1}+a_{1}}-\rme^{\alpha u_{2}+a_{2}})\delta(1-\rme^{\alpha u_{1}+a_{1}}-\rme^{\alpha u_{2}+a_{2}})\Big)\;,
\end{eqnarray}
and
\begin{equation}
g_{t}(a_{1},a_{2})=-(2\pi)^{-2}\sum_{\sigma_{1},\sigma_{2}=\pm1}\sigma_{1}\sigma_{2}\det(1-L_{\sigma_{1},\sigma_{2}})\;.
\end{equation}

In the one-point case, inserting the corresponding analytic continuation in (\ref{kernel M}) yields a one-dimensional projection. This further simplifies the expression for the pdf, compare with (\ref{gt 1pt}). For the two-point case, no further simplification seems to be available.


\begin{section}{Conclusions}
Recently the probability distribution of the free energy of the point-to-point continuum directed polymer has been computed exactly, using the approximation through the weakly asymmetric simple exclusion process \cite{ACQ10,SS10.3}. This result could be reproduced by using replicas and the complete eigenfunction expansion of the propagator of the attractive $\delta$-Bose gas on the line \cite{CLDR10.1,D10.1}.

In our contribution we studied the joint pdf of $F(x_1,t), F(x_2,t)$. In this case the approximation through the weakly asymmetric exclusion process, while still valid, is no longer computable and we have to rely on the replica method, which yields a particular generating function. Invoking a specific factorization, the result is expressed as a Fredholm determinant. In fact, the corresponding operator has a structure rather similar to the case of one point, compare (\ref{G Fredholm 1pt}) and (\ref{G Fredholm 2pt}).

Equipped with this information, we established the large time limit of the pdf yielding a result in agreement with lattice directed polymers at zero temperature. In 1+1 dimensions all models with a short range disorder potential are expected to flow to the zero temperature fixed point. Our result further supports this claim.

The free energy of the point-to-point continuum directed polymer is isomorphic to the solution of the KPZ equation with sharp wedge initial data. Thus we have automatically determined the joint pdf of the heights $h(x_1,t)$, $h(x_2,t)$ of the KPZ equation for large times, in particular the height-height correlation function. This function has been measured recently for droplet growth in a thin film of turbulent liquid crystal \cite{TS10.1}. The experimental curve agrees very well with the theoretical prediction in the limit $t \to \infty$.

\subsection*{Acknowledgements}
It is a pleasure to thank Pierre Le Doussal, Michael Pr\"{a}hofer, and Tomohiro Sasamoto for constructive discussions.
\end{section}


\begin{appendix}
\section{Fredholm determinants}
\label{Appendix Fredholm}
\setcounter{equation}{0}
Let us first consider the case of a finite $n\times n$ matrix $A$. The Taylor expansion of the determinant of $1+zA$ is given by the von Koch formula \cite{B10.1} in terms of the minors of $A$. It holds
\begin{equation}
\label{von Koch}
\det(1+zA)=\sum_{m=0}^{n}\frac{z^{m}}{m!}\sum_{i_{1},\ldots,i_{m}=1}^{n}\det\big(A_{i_{j},i_{k}}\big)_{1\leq j,k\leq m}\;.
\end{equation}
If the matrix $A$ is now replaced by an integral operator $A$ with kernel $A(u,v)=\langle u|A|v\rangle$, von Koch formula (\ref{von Koch}) formally rewrites as
\begin{equation}
\label{def Fredholm}
\det(1+zA)\equiv\sum_{m=0}^{\infty}\frac{z^{m}}{m!}\int \rmd u_{1}\,\ldots\,\rmd u_{m}\,\det\big(A(u_{j},u_{k})\big)_{1\leq j,k\leq m}\;.
\end{equation}
Of course, $\rmd u$ could mean a more general summation procedure. In particular, it could refer to summation over some discrete index and integration over $\mathbb{R}$. (\ref{def Fredholm}) can be used as the \textit{definition} of the Fredholm determinant $\det(1+zA)$. In order for this definition to make sense, the operator $A$ is required to be \textit{trace-class}, we refer to \cite{S05.2} for details. If so, the logarithm of the Fredholm determinant is given by
\begin{eqnarray}
\label{log Fredholm}
\fl\qquad && \log\det(1+zA) = \tr\log(1+zA)\nonumber\\
\fl\qquad && = \sum_{m=1}^{\infty}\frac{(-1)^{m-1}z^{m}}{m}\int \rmd u_{1}\,\ldots\,\rmd u_{m}\,A(u_{1},u_{2})A(u_{2},u_{3})\ldots A(u_{m},u_{1})\;.
\end{eqnarray}
Another useful identity for Fredholm determinants is the cycle property. If $A$ and $B$ are Hilbert-Schmidt operators (\textit{i.e.} $\tr AA^{*}<\infty$ and $\tr BB^{*}<\infty$), then
\begin{equation}
\label{det(1+AB)=det(1+BA)}
\det(1+AB)=\det(1+BA)\;.
\end{equation}
This property allows to exchange the roles of integrations which are inside the defining kernel of $AB$ with the integration corresponding to the Fredholm determinant. We emphasize that the two kernels $AB$ and $BA$ do not necessarily act on the same space.

Numerical evaluations of Fredholm determinants can be performed by discretizing the integrals in (\ref{def Fredholm}). Using the von Koch formula (\ref{von Koch}), the evaluation of a Fredholm determinant is thereby reduced to the computation of the determinant of a finite matrix. We refer to \cite{B10.1,B10.2} for an illuminating discussion and precise error estimates.


\section{Airy operator and Airy Hamiltonian}
\label{Appendix Airy}
\setcounter{equation}{0}
We recall the definition of the Airy Hamiltonian $H$ and of the Airy operator $K$. We work in the space of complex-valued square integrable functions $L^{2}(\mathbb{R})$ with the scalar product
\begin{equation}
\langle f|g\rangle=\int_{-\infty}^{\infty}\rmd u\,f(u)^*g(u)\;.
\end{equation}
The Airy Hamiltonian $H$ is defined by
\begin{equation}
H=-(\partial_{u})^{2}+u\;.
\end{equation}
If necessary, we write $H$ as $H_{u}$ to indicate the variable on which the Airy operator is acting. The Airy function is the solution of the differential equation
\begin{equation}
\Ai''(u)=u\Ai(u)\;
\end{equation}
such that $\Ai(u)\to0$ as $u\to\infty$. Setting
\begin{equation}
\phi_{z}(u)=\Ai(u-z)\;,
\end{equation}
one notes that $\phi_{z}$ satisfies the eigenvalue equation
\begin{equation}
H\phi_{z}=z\phi_{z}\;.
\end{equation}
In addition,
\begin{equation}
\int_{-\infty}^{\infty}\rmd z\,\Ai(u-z)\Ai(u'-z)=\delta(u-u')\;.
\end{equation}
In Dirac notation this completeness relation reads
\begin{equation}
\openone=\int_{-\infty}^{\infty}\rmd z\,|\phi_{z}\rangle\langle\phi_{z}|\;.
\end{equation}
Hence the Airy Hamiltonian has the spectral representation
\begin{equation}
H=\int_{-\infty}^{\infty}\rmd z\,z\,|\phi_{z}\rangle\langle\phi_{z}|\;.
\end{equation}
The projection to all negative energy states defines the Airy operator
\begin{equation}
K=\int_{-\infty}^{0}\rmd z\,|\phi_{z}\rangle\langle\phi_{z}|\;.
\end{equation}
In particular, one has $K=K^{*}$, $K^{2}=K$, and obviously $[K,H] =0$. In position representation, the Airy kernel writes
\begin{equation}
\fl\qquad \langle u|K|v\rangle=\int_{0}^{\infty}\rmd z\Ai(u+z)\Ai(z+v)=\frac{\Ai(u)\Ai'(v)-\Ai'(u)\Ai(v)}{u-v}\;.
\end{equation}
We also define the projection onto the spatial interval $[a,\infty)$ by
\begin{equation}
P_{a}=\int_{a}^{\infty}\rmd u\,|u\rangle\langle u|\;.
\end{equation}
The operator $P_{a}KP_{a}$ is trace class for all $a>-\infty$. $F_{2}(a) = \det(1 - P_{a}KP_{a})$ is by definition the Tracy-Widom distribution \cite{TW94.1} corresponding to the Gaussian Unitary Ensemble of random matrices.


\section{Integration over \texorpdfstring{$q$}{q} and summation over \texorpdfstring{$n$}{n} in the kernel \texorpdfstring{$N$}{N}}
\label{Appendix sum q n}
\setcounter{equation}{0}
We start from the explicit expression
\begin{eqnarray}
N(z,z') &=& \openone_{\{z,z'>0\}}\int_{-\infty}^{\infty}\frac{\rmd q}{2\pi}\sum_{n=1}^{\infty}\rme^{\rmi q(z-z')}\rme^{t n^{3}/24}\rme^{-x\partial_{1}\partial_{2}}\nonumber\\
&& \times\big[-\rme^{-\frac{1}{2}(z+z')}\rme^{-t q^{2}/2}\left(\rme^{\rmi x_{1}q-s_{1}}+\rme^{\rmi x_{2}q-s_{2}}\right)\big]^{n}\;.
\end{eqnarray}
In order to perform the summation over $n$ and the integration over $q$, we insert the classical relation
\begin{equation}\label{D.2}
\rme^{t n^{3}/24}=\int_{-\infty}^{\infty}\rmd u\,\Ai(u)\rme^{(t/8)^{1/3}nu}\;.
\end{equation}
Besides the Airy function there are infinitely many other functions which satisfy (\ref{D.2}). One concrete example would be
\begin{equation}
\mathrm{Ai}(u) + \sin(\pi u)\rme^{-u^{2}/2}\;.
\end{equation}
Our choice is determined by being the only one which provides the correct one-point result. Using the binomial theorem to expand the term of power $n$, one obtains
\begin{eqnarray}
\fl\qquad N(z,z') &=& \openone_{\{z,z'>0\}}\int_{-\infty}^{\infty}\frac{\rmd q\,\rmd u}{2\pi}\sum_{n=1}^{\infty}\sum_{k=0}^{n}\Ai(u)\rme^{\rmi q[z-z'+x_{1}k+x_{2}(n-k)]}\nonumber\\
\fl\qquad && \times\C{n}{k}\left(-\rme^{(t/8)^{1/3}u}\rme^{-\frac{1}{2}(z+z')}\rme^{-t q^{2}/2}\right)^{n}\rme^{-x\partial_{1}\partial_{2}}\rme^{-s_{1}k-s_{2}(n-k)}\;.
\end{eqnarray}
We introduce the parameter $\alpha=(t/2)^{1/3}$ and perform the change of variable $u\to u+2^{2/3}\alpha^{2}q^{2}+2^{-1/3}\alpha^{-1}(z+z')$ in the integral. This results in
\begin{eqnarray}
\fl\qquad N(z,z') &=& \openone_{\{z,z'>0\}}\int_{-\infty}^{\infty}\frac{\rmd q\,\rmd u}{2\pi}\sum_{n=1}^{\infty}\sum_{k=0}^{n}\Ai\left(u+2^{2/3}\alpha^{2}q^{2}+2^{-1/3}\alpha^{-1}(z+z')\right)\nonumber\\
\fl\qquad && \times \rme^{\rmi q[z-z'+x_{1}k+x_{2}(n-k)]}\C{n}{k}(-1)^{n}\rme^{2^{-2/3}\alpha un}\rme^{-x\partial_{1}\partial_{2}}\rme^{-s_{1}k-s_{2}(n-k)}\;.
\end{eqnarray}
Using the relation 
\begin{eqnarray}
&& \int_{-\infty}^{\infty}\frac{\rmd q}{2\pi}\Ai(aq^{2}+b)\rme^{\rmi cq}\nonumber\\
&& \qquad =2^{-1/3}a^{-1/2}\Ai\left(2^{-2/3}\left(b+a^{-1/2}c\right)\right)\Ai\left(2^{-2/3}\left(b-a^{-1/2}c\right)\right)\;,
\end{eqnarray}
see \cite{VSdI97.1}, the integration over $q$ can be performed. One obtains
\begin{eqnarray}
\fl\qquad N(z,z') &=& \openone_{\{z,z'>0\}}\int_{-\infty}^{\infty}\rmd u\sum_{n=1}^{\infty}\sum_{k=0}^{n}\C{n}{k}(-1)^{n}\rme^{2^{-2/3}\alpha un}\rme^{-x\partial_{1}\partial_{2}}\rme^{-s_{1}k-s_{2}(n-k)}
\nonumber\\
\fl\qquad && \times2^{-2/3}\alpha^{-1}\Ai\left(2^{-2/3}u+\alpha^{-1}z+(2\alpha)^{-1}[x_{1}k+x_{2}(n-k)]\right)
\nonumber\\
\fl\qquad && \times\Ai\left(2^{-2/3}u+\alpha^{-1}z'-(2\alpha)^{-1}[x_{1}k+x_{2}(n-k)]\right)\;.
\end{eqnarray}
We change variables as $u\to2^{2/3}u$, $z\to\alpha z$, and use the relation
\begin{equation}
f(z+a)=\exp[a\partial_{z}]f(z)\;
\end{equation}
to move $n$ and $k$ out of the Airy functions. In preparation for the summation over $n$ and $k$, one finds
\begin{eqnarray}
\fl\qquad \alpha N(\alpha z,\alpha z') &=& \openone_{\{z,z'>0\}}\int_{-\infty}^{\infty}\rmd u\sum_{n=1}^{\infty}\sum_{k=0}^{n}\C{n}{k}(-1)^{n}\rme^{\alpha un}\rme^{-x\partial_{1}\partial_{2}}\rme^{-s_{1}k-s_{2}(n-k)}\nonumber\\
\fl\qquad && \times \rme^{(2\alpha)^{-1}(x_{1}k+x_{2}(n-k))(\partial_{z}-\partial_{z'})}\Ai(u+z)\Ai(u+z')\;.
\end{eqnarray}
Noticing the identity
\begin{equation}
\fl\quad \rme^{(2\alpha)^{-1}(x_{1}k+x_{2}(n-k))(\partial_{z}-\partial_{z'})}\rme^{-s_{1}k-s_{2}(n-k)}=\rme^{-(2\alpha)^{-1}(x_{2}\partial_{2}+x_{1}\partial_{1})(\partial_{z}-\partial_{z'})}\rme^{-s_{1}k-s_{2}(n-k)}\;,
\end{equation}
the summations over $k$ and $n$ finally yield the expression (\ref{Ntilde[N]}), (\ref{kernel Ntilde}) for the kernel $N$ 
with $\Phi$ is defined as in Eq. (\ref{Phi}).

\end{appendix}


\section*{References}

\end{document}